\tikzset{defnode/.style={draw,darkgray,circle,fill,inner sep=1.2pt}}
\tikzset{loopbelow/.style={->,loop,in=235,out=305,looseness=14}}
\tikzset{loopabove/.style={->,loop,in=55,out=125,looseness=14}}
\tikzset{loopleft/.style={->,loop,in=145,out=215,looseness=14}}
\tikzset{loopright/.style={->,loop,in=-35,out=35,looseness=14}}
\tikzset{loopne/.style={->,loop,in=10,out=80,looseness=14}}
\tikzset{loopnw/.style={->,loop,in=100,out=170,looseness=14}}
\tikzset{loopsw/.style={->,loop,in=190,out=260,looseness=14}}
\tikzset{loopse/.style={->,loop,in=280,out=350,looseness=14}}
\newcommand{\ie}{i.e.,\xspace}
\newcommand{\mk}{\ensuremath{{\mathcal{M}}}}
\newcommand{\N}{\ensuremath{\mathbb N}\xspace}
\newcommand{\NNNN}{\mathcal{P}(\N^3)\xspace}
\begin{document}
\title{Revisiting the Role of Coverings in Anonymous Networks: Spanning Tree Construction and Topology Recognition}
\author{Arnaud Casteigts, Yves M\'etivier, and John Michael Robson}
\institute{LaBRI, Universit\'e de Bordeaux, CNRS, Bordeaux INP, France\\
  \{acasteig, metivier, robson\}@labri.fr }
\date{} 
\pagestyle{plain}
\maketitle

\begin{abstract}
  This paper revisits two classical distributed problems in anonymous networks, namely spanning tree construction and topology recognition, from the point of view of graph covering theory. For both problems, we characterize necessary and sufficient conditions on the communication graph in terms of directed symmetric coverings. These characterizations answer a long-standing open question posed by Yamashita and Kameda~\cite{YKsolvable}, and shed new light on the connection between coverings and the concepts of views and quotient graphs developed by the same authors.
  
  Characterizing conditions in terms of coverings is significant because it connects the field with a vast body of classical literature in graph theory and algebraic topology. In particular, it gives access to powerful tools such as Reidemeister's theorem and Mazurkiewicz's algorithm. Combined together, these tools allow us to present elegant proofs of otherwise intricate results, and their constructive nature makes them effectively usable in the algorithms.

This paper also gives us the opportunity to present the field of covering theory in a pedagogical way, with a focus on the two aforementioned tools, whose potential impact goes beyond the specific problems considered in this work.
\end{abstract}

\setcounter{page}{1}
\setcounter{footnote}{0}

\section{Introduction}

An algorithm is distributed if its execution takes place on several entities of a system that communicate and coordinate their actions to achieve a common goal. Such a paradigm covers a wide range of topics and settings. In some settings, the \emph{feasibility} of the distributed task is certain, and the main question is the cost at which the task can be realized, \ie its complexity in terms of time and communication. When feasibility is not certain, the natural question is to characterize the conditions under which the task is feasible, and understand what are the main obstructions.

Many sources of impossibility in distributed computing stem from symmetries in the system, which prevent the entities (also called nodes or processors) from making proper decisions. A classical example is the impossibility of consensus in fully asynchronous systems in presence of a single crash~\cite{FLP85}. When the nodes of the system communicate through a network, the structure of the network itself is a possible source of symmetry. In some settings, these symmetries can be broken using randomness and/or unique identifiers. Without such features, \ie when one considers deterministic algorithms in anonymous networks, it is crucial to understand the kinds of symmetry that prevent a problem from being solved.

In a seminal work, Angluin~\cite{Angluin} initiates the study of impossibility results in anonymous networks. In particular, she shows that no deterministic algorithm exists that solves leader election and related problems in presence of certain symmetries. Here, the symmetries are not global ones, but a form of \emph{local similarity} which prevents the nodes from distinguishing between several execution scenarios. The key concept in these characterizations is that of a \emph{covering} relation, \ie a homomorphism $\varphi$ from the actual communication graph $G$ to a (possibly smaller) graph $H$, which preserves neighborhood. Angluin's \emph{lifting lemma} establishes that, if such a relation is proper, then given the execution of an algorithm in $H$, there exists a possible execution of the same algorithm in $G$ which acts indistinguishably on those vertices of $G$ that $\varphi$ sends to a same vertex of $H$. For instance, if a leader is elected in $H$, then several leaders may be elected in $G$, which contradicts the existence of an election algorithm for $G$.

A few years later, Yamashita and Kameda started a systematic study of four representative problems in anonymous networks through a series of articles~\cite{YK88,YKsolvable,YKelection}. These problems are {\em leader election}, {\em edge election}, {\em spanning tree construction}, and {\em topology recognition}. In particular, they obtain in~\cite{YKsolvable} a rich set of characterizations pertaining to different types of symmetries. The key ingredient in these works is that of a \emph{view}. The view of a vertex $v$ is an infinite tree rooted at $v$, which unfolds recursively the structural information available to $v$ (through its neighbors, the neighbors of its neighbors, \textit{etc.}).
The \emph{symmetricity} of a graph in general is then defined as the maximum multiplicity among similar views (over all port numberings). Finally, they define a concept of \emph{quotient graph} in which every dissimilar view is represented by a single vertex, and which plays a similar role as a minimum graph for the covering relation.

While both families of tools---those based on views and those based on coverings---serve the same essential purpose, the way they relate to each other is not straightforward.
One source of confusion is that the computational models considered in~\cite{Angluin} and~\cite{YKsolvable} are different. Angluin considers a coarse-grain pairwise interaction model in which the local symmetry is broken upon interaction between two neighboring nodes (in the same spirit as the more recent population protocols~\cite{population}). In contrast, Yamashita and Kameda consider an asynchronous message passing model with neither atomicity nor local symmetry-breaking facilities. (Both use locally-unique port numbering.) For one of the problems considered, \emph{topology recognition}, Yamashita and Kameda nonetheless express a sufficient condition in terms of Angluin's coverings, warning the reader that the condition is not necessary.

In \cite{BVanonymous},  Boldi and Vigna establish a fundamental link between views and directed graph coverings, considering the notion of ``fibrations'' (a homomorphism between two directed graphs which preserves the outgoing arcs): 
two vertices have the same view if and only if they lie in the same
fibre with respect to a minimal fibration. They also
give algebraic characterizations of computability in anonymous networks, where fibrations play the central role (see~\cite{BVelection} for the particular case of election and~\cite{BVfibrations} for a general overview). A key contribution in these works is that asynchronous message passing can indeed be studied through the lenses of coverings, but on the condition that the communication graph be handled as a \emph{symmetric directed graph}. This subtle, and somewhat counterintuitive aspect, of using directed graphs to deal with undirected networks, is what enables (among others) a proper treatement of loops in the minimum graph, which posed some problems in the case of quotient graphs. In particular, the homomorphism that sends a graph to its quotient graph in~\cite{YKsolvable} does not always induce a bijection among local port numbers due to some of these loops.

\subsection{Contributions}

The contributions of this paper are manifold. One motivation is to revisit in a pedagogical way a number of tools and concepts from the theory of graph coverings. They include, in particular, several versions of graph coverings, in relation to different models of computation, whose correspondence was established in~\cite{Cthese,CMasynj}. In particular, we focus on \emph{symmetric directed coverings}, in correspondence to asynchronous message passing. We review two landmark tools in covering theory. The first, Reidemeister's Theorem~\cite{Reidemeister}, gives a powerful and \textit{constructive} way to relate a graph to all its coverings of a certain size (the original theorem is from early algebraic topology and seems to have been imported into the algorithmic field by Bodlaender~\cite{Bod}). The second, also constructive, is Mazurkiewicz's algorithm~\cite{MazurEnum}, a distributed algorithm that computes a graph of which the actual communication graph is a covering. This computation uses a polynomial number of messages (of polynomial size) and does not rely on views. In particular, we focus on the version of Mazurkiewicz's algorithm which computes symmetric directed coverings and operates in asynchronous message passing. This particular version was defined in~\cite{CMasynj} and used to characterize the fact that, in this communication model, the election problem is solvable in a graph $G$ if and only if $Dir(G)$ is minimal for the symmetric covering relation.

The second contribution is the formulation of necessary and sufficient conditions, in terms of coverings, for the problems of \emph{spanning tree construction} and \emph{topology recognition}, when the number of nodes $n$ is known. These results convey in part notions and ideas similar to the results in~\cite{YKsolvable}; however, the formulation of these conditions in the langage of coverings is significant for specific reasons (detailed below) and it was a wish expressed by Yamashita and Kameda (Section 7.3 of~\cite{YKsolvable}). The exact formulation of these characterizations relies on the symmetric directed version of a graph $G$, $Dir(G)$, obtained by replacing every edge of $G$ by two opposite arcs between the same pair of vertices.
We prove that the graphs for which the spanning tree computation
problem is solvable are the ones such that either $Dir(G)$ is minimal for the symmetric 
covering relation,
or $Dir(G)$ is a $2$-sheeted symmetric 
covering of a symmetric digraph having at least one
loop  and $Dir(G)$ is not a $q$-sheeted symmetric covering of a symmetric digraph with $q>2$ (Theorem~\ref{un}). (A $q$-sheeted covering of a graph $D$ is a covering where every vertex of $D$ has $q$ inverse images.)
The graphs for which the topology recognition
problem is solvable are the graphs $G$ such that $Dir(G)$ is the unique (connected) symmetric covering of its size with respect to all symmetric 
digraphs $D$ such that $Dir(G)$ is a symmetric covering of $D$ (Theorem~\ref{deux}).

Beyond intrinsic interest, these characterizations are significant for their connection with covering machinery. In particular, Mazurkiewicz's algorithm possesses features that are subtly different from algorithms computing the views and the quotient graph. To see this, some background is required. 
Norris showed that the views can be truncated to depth $n-1$~\cite{Norris}, and Tani presented a compression technique that makes it possible to compute the views using polynomially many messages of polynomial size~\cite{compressed}. Then, the quotient graph can be inferred from the views in polynomial time~\cite{YKsolvable}. This workflow is deterministic in the sense that the views capture exactly the same information as the quotient graph, and so, independently from the contingency of an asynchronous execution. In contrast, Mazurkiewicz's is susceptible of computing any graph $H$ of which the actual communication graph $G$ is a covering (i.e., not necessarily the smallest among them), depending on the particular ordering of events in the execution. This should be seen as an advantage only, as the actual minimum graph (analog of quotient graph in terms of coverings) can still be inferred from $H$ and its port numbers in polynomial time~\cite{paulusma}, and on the other hand, $H$ may convey more information about $G$ than the minimum graph does. In particular, if we consider the characterization of feasible cases for topology recognition, some executions may yield a graph $H$ such that $G$ is uniquely determined from $H$, although the actual minimum graph admits several coverings of the size of $G$.

Whether this characterization admits an equivalent formulation in terms of more elementary graph properties is left open in this paper.

\subsection{Further related work}

The two problems considered here, namely \emph{spanning tree construction} and \emph{topology recognition}, are well known problems in the distributed computing literature.
A series of papers initiated by P. Rosenstiehl \cite{Rosen} and motivated by parallel computing,
studies the recognition of graph families by graph automata with a distinguished vertex~\cite{R75,WR79,R94,PR02}. In these articles, the goal is to recognize the \emph{family} of regular topology the automata lives in, such as rectangular structures, torus, cylinder, moebius band, and sphere.
Angluin studies in \cite{Angluin} the problem of recognizing various graph properties by a distributed algorithm in the same coarse-grain interaction model as for election and using the same type of coverings.
In a related model, Courcelle and M\'etivier prove using coverings that a certain minor-closed class of graphs cannot be recognized in general~\cite{CMminors}. The impact of prior knowledge in property recognition was studied, among others, in~\cite{GMMrecog}.
The topology recognition problem in the form we consider in this paper, was introduced by Yamashita and Kameda in~\cite{YKsolvable} (preliminary version~\cite{YK88}).
More recently, \cite{DP16}~considers a variant of the topology recognition problem that lies between identified and anonymous networks, in a message passing model. In this version, each node is given a color which is not necessarily unique, but a color must have a bounded multiplicity $k$, in terms of which the feasibility of election and topology recognition are characterized. 
The topology recognition problem has also been considered from other perspectives than symmetry, for example in terms of its round complexity in the CONGEST model (with unique identifiers)~\cite{CFSV16,FRST16}. More remotely, though still related, the (centralized) problem of assigning port numbers to a symmetric graph in a way that allows fast symmetry breaking in synchronous message passing models was studied in~\cite{KKP16}.

While the problem of constructing a \emph{spanning tree} has been extensively studied in the distributed computing literature, only a tiny fraction of this literature addresses the \emph{feasibility} of this problem. The most notable contribution, here again, is the one of Yamashita and Kameda in~\cite{YKsolvable}. 
Among other things, they prove that this problem is equivalent to the problem of \emph{electing an edge} in the asynchronous message passing model with port numbers, and they give a characterization of graphs where these problems are solvable based on the concepts of views
and symmetricity, to which we shall return later.

Finally, the general literature of coverings include a few other classical pieces. In particular, Leighton~\cite{L82} characterized pairs of graphs that share a common finite covering. Amit et al.~\cite{random-lifts} initiated a series of works where basic properties of random coverings are studied (such as $k$-connectivity). Fianlly, the reader is referred to Gross and Tucker~\cite{GT} for further general background on graph coverings.

\section{Network Model and Basic Definitions}

We consider the usual asynchronous message passing
model (\cite{YKsolvable,Tel}). A network is represented by a simple
connected anonymous graph $G=(V(G),E(G))$ (or simply $(V,E)$ when no ambiguity arises) where
vertices correspond to processes and edges to communication links. 
The vertices that share an edge with a vertex $v$ are called the \emph{neighbors} of $v$, noted $N_G(v)$. The edges incident to $v$ are noted $I_G(v)$ and the number of such edges is called the degree $\deg_G(v)$ of $v$.
Each process can distinguish between the edges
that are incident to it using \emph{port numbers}, i.e., for each $u \in V(G)$ there exists a
bijection $\delta_u$ between these edges and
$[1,\deg_G(u)]$. The set $\delta$ of functions
$\{\delta_u\mid u\in V(G)\}$ is called a \emph{port numbering} of $G$. We denote by $(G,\delta)$ the
 graph $G$ with port numbering~$\delta$.

Each process $v$ in the network represents an entity that is capable
of performing computation steps, sending messages via some port and
receiving any message via some port that was sent by the corresponding
neighbor. We consider asynchronous systems, i.e., each message
 may take an unpredictable (but finite) amount of
time to be delivered. Note that we consider only reliable systems: no fault can occur
on processes or communication links. We also assume that the channels
are FIFO, i.e., for each channel, the messages are delivered in the
order they have been sent.
In this model, a distributed algorithm is given by a local algorithm
that all the nodes execute (they all have
the same). This algorithm consists of a sequence of
computation steps interspersed with instructions to send and receive messages.

\subsection{Spanning tree construction and topology recognition}
\label{sec:problems}

We study the role of coverings in the context of two problems, which are the spanning tree construction problem and the topology recognition problem (also called ``find topology'' in~\cite{YKsolvable}). For both problems, we consider the above asynchronous message passing setting and we assume that the nodes know the total number $n=|V|$ of nodes in the network. The goal is to characterize the graphs that admit a solution
for all port numberings. More precisely, the definitions are as follows:

\begin{itemize}
  \item \textit{Spanning Tree Construction:} 
    Given a communication graph $G=(V,E)$, knowing $n=|V|$, 
    each node must select a set of incident edges, whose global union defines a spanning tree.\smallskip
  \item \textit{Topology Recognition:} 
    Given a communication graph $G=(V,E)$, knowing $n=|V|$,
    each node must compute a graph $G'$ isomorphic to $G$.
\end{itemize}


As observed in~\cite{YKsolvable},
the problem of constructing a spanning tree is equivalent, in terms of feasibility, to the problem of distinguishing a vertex or a pair of neighbor vertices (election of a leader or co-leader). We recall the main arguments of this equivalence, as it plays a central role in the characterization of feasible cases. The reader is referred to Appendix~\ref{sec:coleader} if more details are needed.
The equivalence can be established through two algorithms---one in each direction---which solve one problem provided that the other problem is already solved. The first algorithm shows that one can always elect a leader or a co-leader in a tree, through a sequence of recursive leaf-elimination. This principle was independently rediscovered many times, and it appears in the context of Theorem~4.4 of~\cite{Angluin} (in a different model). A formal description in the asynchronous message passing model is given by Tel in~\cite{Tel} p. 192-193. Initially, each vertex waits until it
has received a message from all but one of its neighbors, then it sends a message to the remaining neighbor. As a special case, the leaves send a message to their unique neighbor without waiting. Then, either a vertex eventually receives a message from all its neighbors before sending a message, and it becomes $leader$, or two neighbors eventually receive a message from each other and they become $co$-$leaders$.

The second algorithm builds a tree initiated at a leader or at two co-leaders. Such an algorithm was proposed by Tarry~\cite{Tarry} (a formal description is given in~\cite{Tel} p. 207). It can be 
formulated through two rules: 1) a process never forwards the token twice through the same channel; and 2) a non-$leader$ (or non-$co$-$leader$) forwards the token to the
neighbor from which it first received it only if there is no other channel possible according to the first rule. In spirit, this process performs a depth-first search from the leader, or two non-overlapping DFSs from the co-leaders, and the edges through which a token is received for the first time become part of the tree. Termination can be made explicit by adding a standard recursive component that propagates notifications from the leaves back to the root(s).
In conclusion:
\begin{theorem}\label{1-2}
In terms of feasibility, computing a spanning tree is equivalent to distinguishing a vertex or a pair of neighboring vertices.
\end{theorem}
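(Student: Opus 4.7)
\medskip
\noindent\textbf{Proof plan.} The plan is to establish the two directions of the equivalence by exhibiting, for each direction, a distributed algorithm that reduces one problem to the other in the asynchronous message passing model with port numbers and knowledge of $n$. Both reductions are already sketched in the text preceding the statement, so the role of the proof is to make the reductions precise and to verify that they preserve feasibility for every port numbering of $G$.

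For the direction ``leader/co-leader election $\Rightarrow$ spanning tree'', I would formalise Tarry's token traversal as follows. Assume an algorithm $\mathcal{A}$ that marks either a single vertex as $leader$ or exactly two adjacent vertices as $co$-$leaders$. Run $\mathcal{A}$ to termination, then start a token at the leader (respectively, one token at each co-leader, with the edge between them declared used and added to the tree). Each node obeys Tarry's two rules: never send the token twice through the same port, and send it back along the port through which it was first received only if no other port is available. A node adds to its local tree edge set the port through which it receives the token \emph{for the first time}; all other receptions are ignored for tree-building purposes. Standard analysis of Tarry's traversal shows that the token performs a depth-first exploration, visits every vertex, and the first-reception edges form a spanning tree rooted at the leader (or a spanning tree made of two subtrees joined by the co-leader edge). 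Termination at each node can be made explicit by propagating ``subtree complete'' acknowledgements from the leaves back to the root(s), using $n$ to recognise when the tree is fully built if needed.

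For the direction ``spanning tree $\Rightarrow$ leader/co-leader election'', I would invoke the leaf-elimination scheme attributed to Tel in the excerpt. Assume an algorithm $\mathcal{B}$ that outputs, at every node, the set of incident tree edges of a spanning tree $T$. Each node then waits until it has received an ELIM message from all but one of its tree-neighbours, at which point it sends an ELIM message along the remaining tree-edge; a node that is a leaf of $T$ sends immediately. Because $T$ is a tree, the induction on $|V(T)|$ shows that exactly one of two things happens: either some vertex receives ELIM on all its tree-edges before sending (and declares itself $leader$), or two adjacent vertices simultaneously hold an ELIM message for each other along their common tree-edge (and both declare themselves $co$-$leaders$). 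In either case a valid distinguished output is produced.

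The combination of the two reductions yields the equivalence in terms of feasibility, since both reductions are fully distributed, deterministic, use only local port information, and work for every port numbering of $G$. The main subtlety to check carefully is the second reduction: one must verify that the leaf-elimination process is well-defined on a tree (no three-way race can arise) and always terminates with either one leader or one pair of co-leaders, which follows from a straightforward induction on the size of the tree combined with the FIFO assumption on channels.
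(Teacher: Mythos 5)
Your proposal is correct and follows essentially the same route as the paper: the equivalence is established via Tarry's token traversal (to build a spanning tree from a leader or a pair of co-leaders) in one direction, and the recursive leaf-elimination scheme on the tree (each node sending to its last remaining neighbour, yielding either a unique $leader$ or a unique pair of $co$-$leaders$) in the other. The details you flag as needing care, namely the well-definedness and termination of the leaf elimination, are exactly those the paper delegates to Tel's analysis, so nothing is missing.
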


The characterization of feasible cases for topology recognition does not rely on such equivalences. It is based directly on graph coverings.

\section{A Quick Tour of Graph Covering Theory}

Generally speaking, a graph $G$ is a \emph{covering} (also called a \emph{lift}) of a graph $H$ if there is a map from the vertices of $G$ to the vertices of $H$ that induces a local isomorphism. Informally, the local configurations of the vertices are the same in $G$ and $H$. The exact definition depends on the type of graph considered, in particular whether the graph is directed or undirected, and simple or non-simple (\ie with loops and/or multiple edges). This section offers a quick overview of the topic, starting with the simplest setting.

\subsection{Coverings of undirected graphs}
\label{sec:undirected}
When the graph is simple and undirected, the local isomorphism induced by the homomorphism can be defined in terms of \emph{neighborhood} (as opposed to edges). Following \cite{Angluin,Bod},
we say that $G$ is a {\em covering} of $H$ via
$\varphi$  if $\varphi$ is a surjective
homomorphism from $G$ onto $H$ such that, for every vertex $v$ of
$V(G)$, the restriction of $\varphi$ to $N_G(v)$ is a bijection onto
$N_H(\varphi(v))$. In other words, neighbors are sent to neighbors. The covering is proper if $G$ and $H$ are not
isomorphic, as in Figure~\ref{fig:exemple}.
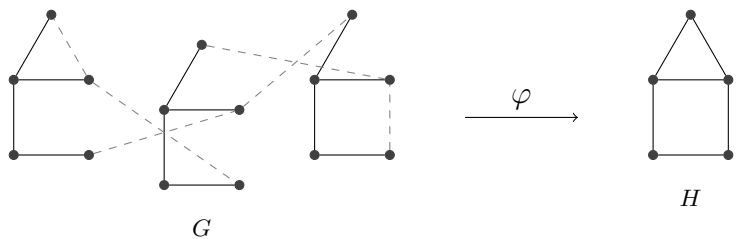
\begin{figure}[h]
    \begin{center}
    \begin{tikzpicture}[centered]
      \tikzstyle{every node}=[defnode]
      \path (0,0) node(a){};
      \path (a)+(1,0) node(b){};
      \path (a)+(0,1) node(c){};
      \path (a)+(1,1) node(d){};
      \path (c)+(60:1) node(e){};

      \path (a)+(2,-0.4) node(aa){};
      \path (aa)+(1,0) node(bb){};
      \path (aa)+(0,1) node(cc){};
      \path (aa)+(1,1) node(dd){};
      \path (cc)+(60:1) node(ee){};

      \path (aa)+(2,0.4) node(aaa){};
      \path (aaa)+(1,0) node(bbb){};
      \path (aaa)+(0,1) node(ccc){};
      \path (aaa)+(1,1) node(ddd){};
      \path (ccc)+(60:1) node(eee){};

      \path (8.5,0) node(bota){};
      \path (bota)+(1,0) node(botb){};
      \path (bota)+(0,1) node(botc){};
      \path (bota)+(1,1) node(botd){};
      \path (botc)+(60:1) node(bote){};

      \tikzstyle{every path}=[]
      \draw (a)--(b);
      \draw (a)--(c);
      \draw (c)--(d);
      \draw (c)--(e);

      \draw (aa)--(bb);
      \draw (aa)--(cc);
      \draw (cc)--(dd);
      \draw (cc)--(ee);

      \draw (aaa)--(bbb);
      \draw (aaa)--(ccc);
      \draw (ccc)--(ddd);
      \draw (ccc)--(eee);

      \draw (bota)--(botb);
      \draw (bota)--(botc);
      \draw (botc)--(botd);
      \draw (botc)--(bote);
      \draw (botb)--(botd);
      \draw (botd)--(bote);

      \tikzstyle{every node}=[]
      \draw (6,0.5) edge[->] node[above]{\large $\varphi$} (7.5,.5);

      \tikzstyle{every path}=[dashed, gray]
      \draw (b)--(dd);
      \draw (d)--(bb);

      \draw (ee)--(ddd);
      \draw (dd)--(eee);

      \draw (d)--(e);

      \draw (bbb)--(ddd);

      \tikzstyle{every node}=[black]
      \path (aa)+(0.5,0) node[below=10pt] (g){$G$};
      \path (bota)+(0.5,0) node[below=10pt] (h){$H$};
    \end{tikzpicture}
    \end{center}
    \caption{\label{1stex}\label{fig:exemple}
      Example of a 3-sheeted covering of a simple undirected graph, where the morphism $\varphi$ sends the vertices of $G$ to the vertices drawn similarly in $H$.
    }
\end{figure}

This definition can be extended to the case that
 multiple edges are allowed, by substituting $N_G(v)$ by $I_G(v)$,
in the definition. Thus, for every vertex $v$ of
$V(G)$ the restriction of $\varphi$ to $I_G(v)$ is a bijection onto
$I_H(\varphi(v))$, \ie incident edges are sent to incident edges (see Figure~\ref{premiersexemples}).
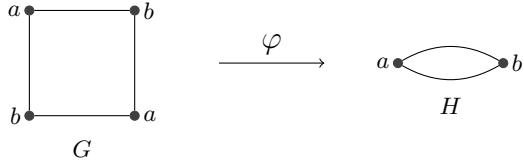
\begin{figure}[h]
\begin{center}
    \begin{tikzpicture}[scale=1.4]
      \tikzstyle{every node}=[defnode]
      \path (3.5,3.5) node (a){};
      \path (4.5,3.5) node (b){};
      \path (1.8,3.5) coordinate (bot);
      \path (2.8,3.5) coordinate (top);
      \path (0,3) node (c){};
      \path (1,3) node (d){};
      \path (0,4) node (e){};
      \path (1,4) node (f){};
      \tikzstyle{every node}=[]
      \path (a) node[left] (la){$a$};
      \path (b) node[right] (la){$b$};
      \path (c) node[left] (la){$b$};
      \path (d) node[right] (la){$a$};
      \path (e) node[left] (la){$a$};
      \path (f) node[right] (la){$b$};
      \draw (bot) edge[->] node[above]{\large $\varphi$} (top);
      \draw (a) edge[bend left] (b);
      \draw (a) edge[bend right] (b);
      \draw (c)--(d)--(f)--(e)--(c);
      \path (c)+(0.5,0.1) node[below=10pt] (g){$G$};
      \path (a)+(0.5,0) node[below=10pt] (h){$H$};
    \end{tikzpicture}
\end{center}
\caption{Example of a $2$-sheeted covering of an undirected graph with multiple edges, where the two vertices labeled $a$ (resp. $b$) on $G$ are sent to the same vertex labeled $a$ (resp. $b$) in $H$.}
\label{premiersexemples}
\end{figure}
Given a covering map $\varphi$ from $G$ to $H$, we call $G$ the \emph{total graph} of $\varphi$ and $H$ the \emph{base} of $\varphi$ (terminology from~\cite{BVfibrations}). We say that a graph $G$ is \emph{minimal} for the covering relation if there does not exist a graph $H$ such that $G$ is a proper covering of $H$. Naturally, the minimality of a graph is dependent on the type of restriction considered. For example, the graph $G$ in Figure~\ref{premiersexemples} is minimal in the context of simple graphs, but not if multiple edges are allowed. Similarly, the base graph of $\varphi$ in Figure~\ref{fig:exemple} is minimal in the context of simple graphs, and even when multiple edges are allowed, but it is not minimal if directed loops are allowed. 

A key property of coverings, which is independent from the restriction considered, is that if $G$ is a covering of $H$ via $\varphi$, then the number of vertices (and edges) of $G$ is a multiple of the one in $H$. In fact, the following property holds:

\begin{lemma}[\cite{Reidemeister,Angluin}]
  \label{lem:tree}
Let $T$ be
  a subgraph of $H.$ If $T$ is a tree then $\varphi^{-1}(T)$ is a set
  of disjoint trees, each isomorphic to $T.$
\end{lemma}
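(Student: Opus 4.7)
The plan is to show that each connected component of $\varphi^{-1}(T)$ is itself a tree isomorphic to $T$ via $\varphi$, and that these components are naturally indexed by $\varphi^{-1}(r)$ for any chosen root $r$ of $T$. Concretely, I root $T$ at an arbitrary vertex $r$ and, for each $\tilde r \in \varphi^{-1}(r)$, construct a subgraph $\tilde T_{\tilde r}$ of $G$ by induction on the distance (in $T$) from $r$: having already lifted $u \in T$ to some $\tilde u \in \tilde T_{\tilde r}$, if $v$ is a child of $u$ in $T$, then by the local-bijection property of $\varphi$ at $\tilde u$ there is a unique neighbor $\tilde v$ of $\tilde u$ in $G$ with $\varphi(\tilde v) = v$; I add $\tilde v$ and the edge $\{\tilde u, \tilde v\}$ to $\tilde T_{\tilde r}$. (If the setting allows multiple edges, the same inductive step is carried out by applying the local bijection on $I_G(\tilde u)$ rather than $N_G(\tilde u)$, picking a unique incident edge for each incident edge of $u$ in $T$.)

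By construction, the restriction of $\varphi$ to $\tilde T_{\tilde r}$ is a surjective graph homomorphism onto $T$, so it remains to verify that it is injective on vertices. Suppose, for contradiction, that two distinct vertices of $\tilde T_{\tilde r}$ both map to the same $x \in T$. Each is linked to $\tilde r$ by a path in $\tilde T_{\tilde r}$, and these two paths project under $\varphi$ to two walks from $r$ to $x$ in $T$. Since $T$ is a tree, both projected walks must coincide with the unique $r$-to-$x$ path in $T$. Traversing this common path outward from $\tilde r$ and applying the local bijection at each successively visited vertex forces the two lifts to agree edge by edge, yielding the desired contradiction. Hence $\varphi$ is an isomorphism from $\tilde T_{\tilde r}$ onto $T$, so $\tilde T_{\tilde r}$ is in particular a tree.

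It then remains to check that the lifts are pairwise disjoint and together exhaust $\varphi^{-1}(T)$. Disjointness follows from the same uniqueness-of-lifting idea: a shared vertex $\tilde x$ between $\tilde T_{\tilde r}$ and $\tilde T_{\tilde r'}$ would, upon lifting the unique $T$-path from $\varphi(\tilde x)$ back to $r$, identify $\tilde r$ with $\tilde r'$. For exhaustion, given any $\tilde x \in \varphi^{-1}(T)$, I lift the unique $T$-path from $\varphi(\tilde x)$ to $r$ starting at $\tilde x$; this yields some $\tilde r \in \varphi^{-1}(r)$ with $\tilde x \in \tilde T_{\tilde r}$, and an analogous one-step lifting argument handles preimages of edges. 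The principal obstacle in the whole argument is the injectivity step: ensuring that the inductive construction does not fold the lifted tree back on itself. This depends on the interplay between the acyclicity of $T$ (uniqueness of paths in the base) and the covering property (uniqueness of edge lifts once an endpoint is fixed); once these two ingredients are combined, disjointness and exhaustion reduce to the same uniqueness principle.
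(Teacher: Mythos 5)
Your argument is correct; note that the paper itself gives no proof of this lemma (it is imported verbatim from Reidemeister and Angluin), so the standard unique-path-lifting argument you give is exactly what is intended. One remark: your injectivity step is more elaborate than necessary. Since each constructed lift $\tilde v$ satisfies $\varphi(\tilde v)=v$, lifts of distinct vertices of $T$ are automatically distinct (as are lifts of distinct edges), so $\tilde T_{\tilde r}$ is a connected subgraph with $|V(T)|$ vertices and $|V(T)|-1$ edges and is therefore a tree isomorphic to $T$ with no contradiction argument needed; the genuine uses of unique path lifting are the ones you make for disjointness and exhaustion. Also, as stated your projected paths are only walks in $T$ a priori, so to invoke uniqueness of the $r$--$x$ path you should take the specific construction paths (whose projections are tree paths by construction) rather than arbitrary connecting paths; with that adjustment everything goes through.
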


Let $G$ be a covering of $H$ through $\varphi$.
By considering a spanning tree of $H$ and paths between any two vertices of $H$,
Lemma~\ref{lem:tree} implies that there exists an integer $q$ such that $Card(\varphi^{-1}(v))=q$ for
  all $v\in V(H)$. This number is called the number of {\em sheets} of the covering and $G$ (or $\varphi$) is called a {\em $q$-sheeted covering} of $H$.
A deep result by Reidemeister~\cite{Reidemeister} (through~\cite{Bod} p.~168) establishes that \emph{all} the $q$-sheeted coverings of a given graph $H$ can be obtained by making $q$
copies of an \emph{arbitrary} spanning tree $T$ of $H$.
More precisely,


\begin{theorem}[\cite{Reidemeister}]
  Let $H$ be a connected graph and let $T$ be
  a spanning tree of $H$. A graph $G$ is
  a covering of $H$ if and only if there exist a non-negative integer
  $q$ and a set 
  $\Sigma=\{\sigma_{(u,v)}\mid\ u,v\in V(H), \{u,v\}\in E(H)\setminus
  E(T)\}$ of
  permutations\footnote { with the convention that $\sigma_{(u,v)} =
    \sigma^{-1}_{(v,u)}$ }  on $[1,q]$ such that $G$ is isomorphic to the graph
  $H_{T,\Sigma}$ defined by:
  \begin{eqnarray*}
    V(H_{T,\Sigma}) & = & \{ u_i\mid\ u\in V(H)\mid i\in [1,q] \},\\
    E(H_{T,\Sigma}) & = & \{\{u_i,v_i\}\mid \{u,v\}\in E(T),\ i\in [1,q] \} \cup\\
    & & \{\{u_i,v_{\sigma_{(u,v)}(i)}\}\mid  \{u,v\}\in
    E(H)\setminus E(T),\ i\in [1,q] \}. 
  \end{eqnarray*}
\end{theorem}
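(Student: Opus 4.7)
The plan is to prove the theorem by establishing the two directions separately, using Lemma~\ref{lem:tree} as the central structural tool for the harder direction.

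For the easy direction, I would take $G = H_{T,\Sigma}$ and show that the projection $\varphi: u_i \mapsto u$ is a covering. Edges of the form $\{u_i,v_i\}$ project onto tree edges $\{u,v\} \in E(T)$, and edges of the form $\{u_i, v_{\sigma_{(u,v)}(i)}\}$ project onto non-tree edges $\{u,v\}$; this confirms that $\varphi$ is a surjective homomorphism. To verify the local bijection on $I_G(u_i)$, observe that each tree edge $\{u,v\}$ of $H$ incident to $u$ lifts to exactly one incident edge $\{u_i, v_i\}$ at $u_i$, and each non-tree edge $\{u,v\}$ incident to $u$ lifts to exactly one incident edge $\{u_i, v_{\sigma_{(u,v)}(i)}\}$ at $u_i$, thanks to the fact that $\sigma_{(u,v)}$ is a permutation. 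The convention $\sigma_{(u,v)} = \sigma^{-1}_{(v,u)}$ guarantees that this description is consistent whether one views an edge from the $u$-side or the $v$-side.

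For the more substantial direction, assume $G$ is a covering of $H$ via some $\varphi$, and let $q$ be the number of sheets (the integer given by Lemma~\ref{lem:tree} applied to any spanning tree $T$ of $H$). By that lemma, $\varphi^{-1}(T)$ is a disjoint union of $q$ trees $T_1,\ldots,T_q$, each mapped isomorphically onto $T$ by $\varphi$. I would then fix, for each $u \in V(H)$, a labeling so that $u_i$ denotes the unique vertex of $T_i$ sent to $u$; this defines a bijection between $V(G)$ and $\{u_i \mid u\in V(H),\ i\in[1,q]\}$, and the tree edges of $G$ are exactly those of the form $\{u_i,v_i\}$ with $\{u,v\} \in E(T)$.

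It remains to extract the permutations $\sigma_{(u,v)}$ from the non-tree edges. For a fixed $\{u,v\}\in E(H)\setminus E(T)$, the preimage $\varphi^{-1}(\{u,v\})$ consists of $q$ edges, and the local bijection condition at each $u_i$ and each $v_j$ ensures that these edges form a perfect matching between $\{u_1,\dots,u_q\}$ and $\{v_1,\dots,v_q\}$; I would define $\sigma_{(u,v)}(i) = j$ whenever $\{u_i,v_j\}$ is such an edge. Reading the same matching from the $v$-side automatically yields $\sigma_{(v,u)} = \sigma_{(u,v)}^{-1}$, so the convention is satisfied. Comparing edge sets then gives $G \cong H_{T,\Sigma}$. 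The main obstacle I anticipate is bookkeeping rather than conceptual difficulty: one must pick the labeling of the trees $T_i$ once and for all and carefully check that the permutation extracted from each non-tree edge is well defined regardless of which endpoint is used as a reference, which is exactly where the symmetry convention enters.
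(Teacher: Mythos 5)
Your proof is correct and follows the classical argument: the paper itself states this theorem without proof (citing Reidemeister via Bodlaender), and your two directions --- checking that the projection $u_i\mapsto u$ of $H_{T,\Sigma}$ is a local bijection on incident edges, and conversely using Lemma~\ref{lem:tree} to lift the spanning tree into $q$ disjoint labeled copies and reading each non-tree fibre as a perfect matching between $\{u_1,\dots,u_q\}$ and $\{v_1,\dots,v_q\}$ --- are exactly the standard route, with the convention $\sigma_{(u,v)}=\sigma_{(v,u)}^{-1}$ handled correctly. No gaps.
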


The imprint of Reidemeister's construction is visible in Figure~\ref{fig:exemple}, where plain edges depict the copies of a spanning tree of the base graph, and dashed edges correspond to a possible choice of permutations for the remaining edges (using $(1)(23)$ for the higher edge and $(12)(3)$ for the lower edge). Although the construction is presented here on simple undirected graphs, it is actually not specific; for instance, if $H$ has multiple edges, then at most one of them may belong to a spanning tree and permutations are chosen for all the others. We shall later apply Reidemeister's theorem to more general graphs.

\subsection{Coverings of directed graphs}

The symmetries arising in the asynchronous message passing model are captured by a special case of coverings in directed graphs (digraphs) possibly having multiple arcs and loops (see e.g.~\cite{BVelection,BVfibrations,Cthese,CMasynj}). The communication graph in this case is better seen as a directed symmetric graph $Dir(G)$ canonically obtained from $G$ by replacing each edge $\{u,v\}$ by two
arcs $(u,v)$ and $(v,u)$.
As observed by Lynch~\cite{Lynch},  
a distributed algorithm can be seen 
as executed indifferently on an undirected graph $G$ or on the directed graph $Dir(G)$.

The definition of coverings in general digraphs requires a certain level of expressivity. In the following, a digraph $D$ will thus be represented as a quadruplet $(V, A, s, t)$, where $V$ is the set of vertices, $A$ the set of arcs, and $s$ and $t$ are two maps that assign to an arc of $A$ a source and a target, respectively (both are vertices in $V$). A loop is an arc whose source and target coincide. Multi-arcs are arcs which have the same source and the same target. A digraph without loops or multi-arcs is called simple (for example, $Dir(G)$ is a simple digraph).
A \emph{symmetric} digraph $D$ is a digraph endowed with an involution (a self-inverse bijection) 
$Sym: A \rightarrow A$ such that for
every $a \in A, s(a)=t(Sym(a))$. In a symmetric digraph $D$, the
degree of a vertex $v$ is $\deg(v) = |\{a \mid s(a)=v\}|$ (or equivalently, $|\{a \mid
t(a) = v\}|$, since the graph is symmetric) and we denote by $N(v)$ the corresponding set of neighbors. The arcs whose target (resp. source) is $v$ are said to be \emph{incoming} at $v$ (resp. \emph{outgoing} from $v$).
These notations are parameterized (or subscripted) by the name of the graph in case of ambiguity.


A \emph{homomorphism} $\varphi$ between a digraph $D$ and a digraph
$D'$ is a mapping $\varphi \colon V(D)\cup A(D) \rightarrow V(D')\cup
A(D')$ such that $\varphi(s(a)) =
s(\varphi(a))$ and $\varphi(t(a)) = t(\varphi(a))$ for all arcs $a \in A(D)$.  In other words, incoming arcs at a vertex are sent to incoming arcs at the image of that vertex, and the same holds for outgoing arcs. A homomorphism that induces an \emph{isomorphism} (a bijection) between the incoming arcs of each vertex and the incoming arcs of its image is called a \emph{fibration}. The analog for outgoing arcs is an \emph{opfibration}. A \emph{covering} for digraphs is a homomorphism that is both a fibration and an opfibration. 
Finally, a \emph{symmetric covering} is a covering between
symmetric digraphs that preserves the involution $Sym$, that is, $\varphi(Sym(a)) = Sym(\varphi(a))$ for all arcs $a \in A(D)$. 
(See~\cite{BVfibrations} for additional background.)

Observe that not all fibrations are coverings, and not all coverings are \emph{symmetric} coverings, which may be a source of confusion. An example of covering that is not symmetric is shown in Figure~\ref{fig:non-symmetric}. This can be shown using basic properties of coverings. In particular, Theorem~38 in~\cite{BVfibrations} establishes that the coverings of $d$-bouquets are exactly the $d$-regular graphs, thus $G$ (which is $3$-regular) is a covering of $H$ (which is a $3$-bouquet). However, the fact that $3$ is an odd number forces at least one of the loops in $H$ to be its own symmetric. The inverse images of this loop in $G$ must be a perfect matching, because all the vertices in $G$ are inverse images of the only vertex in $H$. However, $G$ does not admit a perfect matching and $\varphi$ is therefore not a symmetric covering. (This type of argument was already used in~\cite{YKsolvable}.) 

\begin{figure}[h]
  \centering
\begin{tikzpicture}
\tikzstyle{every node}=[defnode]
\path (0,0) node (m){};
\path (-1,0) node (a){};
\path (0,1) node (b){};
\path (1,0) node (c){};
\draw (m)--(a);
\draw (m)--(b);
\draw (m)--(c);

\path (2,0) node (c1){};
\path (3,0) node (c2){};
\path (2.5,-1) node (c3){};
\path (2.5,1) node (c4){};
\draw (c1)--(c2);
\draw (c3)--(c);
\draw (c4)--(c);
\draw (c3)--(c1);
\draw (c4)--(c1);
\draw (c3)--(c2);
\draw (c4)--(c2);

\path (0,2) node (b1){};
\path (0,3) node (b2){};
\path (-1,2.5) node (b3){};
\path (1,2.5) node (b4){};
\draw (b1)--(b2);
\draw (b3)--(b);
\draw (b4)--(b);
\draw (b3)--(b1);
\draw (b4)--(b1);
\draw (b3)--(b2);
\draw (b4)--(b2);

\path (-2,0) node (a1){};
\path (-3,0) node (a2){};
\path (-2.5,-1) node (a3){};
\path (-2.5,1) node (a4){};
\draw (a1)--(a2);
\draw (a3)--(a);
\draw (a4)--(a);
\draw (a3)--(a1);
\draw (a4)--(a1);
\draw (a3)--(a2);
\draw (a4)--(a2);

\path (7,0) node (mm){};
\tikzstyle{every path}=[semithick,->]
\draw (mm) edge[loop,in=125,out=55,min distance=16mm,looseness=10] (mm);
\draw (mm) edge[loop,in=5,out=-65,min distance=16mm,looseness=10] (mm);
\draw (mm) edge[loop,in=-115,out=-185,min distance=16mm,looseness=10] (mm);

\tikzstyle{every node}=[]
\path (4.5,0) node(arr){\Large $\longrightarrow$} node[above=2pt]{$\varphi$};
\path (0,0) node[below=20pt] {$G=Dir(G)$};
\path (7,0) node[below=20pt] {$H$};
\end{tikzpicture}
\caption{\label{fig:non-symmetric}Example of a covering that is not symmetric. (For simplicity, $G$ is depicted instead of $Dir(G)$.)}
\end{figure}
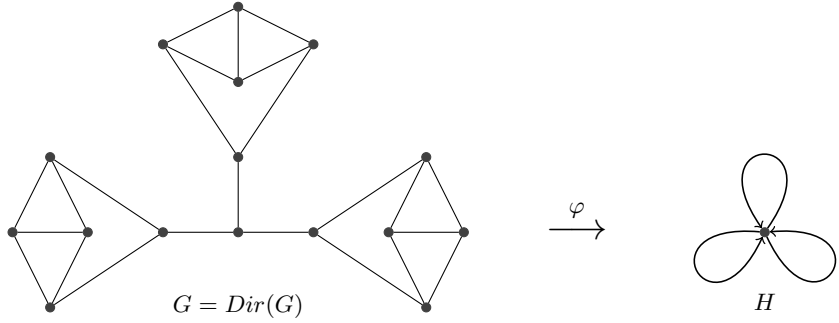

As already mentioned, symmetric coverings in symmetric directed graphs where the base graph may have multiple arcs and loops is the appropriate model to capture symmetries in the asynchronous message passing. More specifically, it is the extension of these concepts to the case where edges (arcs) are locally identified by port numbers (see below). To get an idea as to why digraphs are required, consider the basic example on the left side of Figure~\ref{arete}. Here, the communication graph $G$ consists of a single edge (say, with port numbers $1$ on both sides). Definitely, this graph is minimal if one restricts the base graph to be simple, and yet, election is not solvable in this graph. The adaptation of Angluin's lifting argument in such a context requires a directed loop. (The quotient graphs in~\cite{YKsolvable} use an undirected loop instead, with complications in the arguments of the proofs.)

\begin{figure}[h]
\begin{center}
  \begin{tikzpicture}[yscale=1.6]
    \tikzstyle{every node}=[defnode]
    \path (1,0) node (e){};
    \path (0,0) node (d){};
    \path (3,0) node (c){};
    \path (2,0) node (b){};
    \path (5,0) node (a){};
    \draw (d) edge (e);
    \tikzstyle{every path}=[semithick]
    \draw (b) edge[->,bend left=15] (c);
    \draw (c) edge[->,bend left=15] (b);
    \draw (a) edge[semithick,->,min distance=8mm,in=-30,out=30,looseness=10] (a);
    \tikzstyle{every node}=[]
    \path (.5,0) node[below=10pt](la){$G$};
    \path (2.5,0) node[below=10pt](la){$Dir(G)$};
    \path (5.25,0) node[below=10pt](la){$H$};
    \path (.5,.3) coordinate(to);
    \path (.5,.7) coordinate(from);
    \draw (1.5,0) node{$=$};
    \draw (3.5,0) edge[->] node[above]{$\varphi$} (4.5,0);
    \path (0,-.8) coordinate (bidon);
  \end{tikzpicture}
  \hspace{2cm}
    \begin{tikzpicture}
      \path (0,-.9) coordinate (bidon);
      \tikzstyle{every node}=[defnode]
      \path (0,0) node (a){};
      \path (1,0) node (b){};
      \path (0,1) node (c){};
      \path (1,1) node (d){};
      \draw (a)--(b)--(d)--(c)--(a);
      \tikzstyle{every node}=[]
      \path (.5,0) node[below=5pt] (lab){$G'$};
      \path (1.5,.5) node {$=$};
    \end{tikzpicture}
    \begin{tikzpicture}
      \path (0,-.88) coordinate (bidon);
      \tikzstyle{every node}=[defnode]
      \path (0,0) node (a){};
      \path (1,0) node (b){};
      \path (0,1) node (c){};
      \path (1,1) node (d){};
      \tikzstyle{every path}=[->,bend left=15,shorten >=.5pt]
      \draw (a) edge (b);
      \draw (b) edge (a);
      \draw (b) edge (d);
      \draw (d) edge (b);
      \draw (d) edge (c);
      \draw (c) edge (d);
      \draw (c) edge (a);
      \draw (a) edge (c);
      \tikzstyle{every node}=[]
      \path (.5,0) node[below=5pt] (lab){$Dir(G')$};
      \draw[->] (1.5,.5) -- node[above]{$\varphi'$} (2.5,.5);
      \tikzstyle{every node}=[defnode]
      \path (3.5,0.5) node (ef){};
      \tikzstyle{every path}=[]
      \draw (ef) edge[semithick,->,loop,min distance=10mm,in=-30,out=30,looseness=10] (ef);
      \draw (ef) edge[semithick,->,loop,min distance=10mm,in=150,out=210,looseness=10] (ef);
      \tikzstyle{every node}=[]
      \path (3.5,0.5) node[below=15pt] (lab){$H'$};
    \end{tikzpicture}
\end{center}
\caption{
  Two examples of (proper) symmetric coverings in symmetric digraphs.
}
\label{arete}
\end{figure}
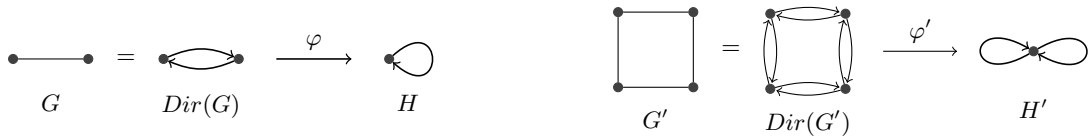






\subsection{Coverings and port numbers}

The above definitions of coverings can be naturally extended to the case that the digraph is equipped with a port numbering $\delta$. In this case, the additional requirement is that the local isomorphism induced by the homomorphism $\varphi$ should preserve local port numbers as well.
Two importants facts with port numberings are that (1) port numbers cannot create symmetries which do not pre-exist in the underlying graph; and (2) there always exist port numbers that preserve the symmetries of the underlying graph. The first is particularly useful to prove sufficient conditions, and the second to prove necessary conditions. In particular, in the context of symmetric digraphs, we will use the following:

\begin{lemma}[\cite{CMasynj}*]
  \label{lem:ports1}
  If $Dir(G)$ is minimal for the symmetric covering relation, then so is $(Dir(G),\delta)$ for all port numberings $\delta$.  
\end{lemma}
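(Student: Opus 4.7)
The plan is to prove the contrapositive: assume that $(Dir(G),\delta)$ is not minimal for the symmetric covering relation on port-numbered symmetric digraphs, and derive that $Dir(G)$ itself is not minimal for the symmetric covering relation on plain symmetric digraphs.

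So suppose there exists a port-numbered symmetric digraph $(D,\delta')$ and a proper port-preserving symmetric covering $\varphi\colon (Dir(G),\delta)\to (D,\delta')$. I would then forget the port numbers and observe that the same map $\varphi\colon V(Dir(G))\cup A(Dir(G))\to V(D)\cup A(D)$ is still a symmetric covering in the sense of the underlying symmetric digraphs: the conditions of being a surjective homomorphism, of inducing a bijection between the incoming arcs at every $v$ and those at $\varphi(v)$ (and similarly for outgoing arcs), and of commuting with the involution $Sym$, are all conditions on vertices and arcs that are independent from the port labels. Adding the requirement that ports be preserved only \emph{strengthens} the notion of covering; removing it never breaks it.

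It remains to argue that the covering is still proper after forgetting the ports. This follows from the fact that a symmetric covering is always $q$-sheeted for some integer $q$ (Lemma~\ref{lem:tree} and the discussion following it), and this integer depends only on the underlying digraphs, not on the port numbering. Hence the value of $q$ is the same before and after we forget the ports. Since the port-numbered covering was assumed proper, we have $q\geq 2$, so the underlying symmetric covering from $Dir(G)$ onto $D$ is itself proper. This contradicts the minimality of $Dir(G)$ for the symmetric covering relation, completing the argument.

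I do not expect any real obstacle here; the only thing to be careful about is to check that all clauses of the definition of symmetric covering transfer cleanly when ports are dropped, and that properness is witnessed by the number of sheets rather than by any port-dependent data. Both are immediate from the definitions recalled earlier in the section.
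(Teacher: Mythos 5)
Your contrapositive argument is correct: a port-preserving symmetric covering is by definition a symmetric covering with an extra constraint, so forgetting the ports leaves a genuine symmetric covering of the underlying digraphs, and properness survives because it is witnessed by the number of sheets $q\geq 2$, a quantity that depends only on the map on vertices and not on the port labels. The paper itself does not spell out a proof (it defers to the proof of Theorem~4.2 in~\cite{CMasynj}), but your forgetful argument is the standard one and is exactly the formalization of the paper's stated intuition that ``port numbers cannot create symmetries which do not pre-exist in the underlying graph.''
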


\begin{lemma}[\cite{CMasynj}*]
  \label{lem:ports2}
If $Dir(G)$ is a proper symmetric covering of a
symmetric digraph $D$, then there
exist port numberings $\delta$ for $G$ and $\delta'$ for $D$ such that $(Dir(G),\delta)$ is a
proper symmetric covering of $(D,\delta')$. 
\end{lemma}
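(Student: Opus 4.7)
The plan is to obtain $\delta$ and $\delta'$ by first choosing $\delta'$ arbitrarily on $D$ and then pulling it back to $Dir(G)$ via $\varphi$. Concretely, I would first fix any port numbering $\delta'$ on $D$, that is, for every vertex $v' \in V(D)$, an arbitrary bijection between its outgoing arcs and $[1,\deg_D(v')]$. No constraints need to be imposed on this step; the coherence with the involution will be inherited from the covering map, not enforced on $\delta'$.

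Next, I would define a port number for each arc $a$ of $Dir(G)$ to be the port number of $\varphi(a)$ at its source in $D$. Because $\varphi$ is a covering, its restriction to the outgoing arcs at every $u \in V(Dir(G))$ is a bijection onto the outgoing arcs at $\varphi(u)$; since $\delta'$ assigns pairwise distinct labels to the latter, the pullback gives a bijection between the outgoing arcs at $u$ and $[1,\deg_{Dir(G)}(u)]$. Translating back to $G$, the port of $u$ for an incident edge $\{u,v\}$ is defined as the pulled-back port of the arc $(u,v)$; this yields a valid port numbering $\delta$ on $G$, whose induced numbering on $Dir(G)$ coincides arc-wise with the one above. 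The assumption $\varphi(Sym(a)) = Sym(\varphi(a))$ ensures coherence with $Sym$ throughout: for instance, a self-symmetric loop of $D$ at port $i$ lifts to a set of $Sym$-paired arcs in $Dir(G)$, each bearing port $i$ at its source, which consistently produces edges of $G$ carrying port $i$ at both endpoints.

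To finish, $\varphi$ preserves ports by construction, and it remains a symmetric covering since we have only added labels to existing arcs. Properness is inherited: an isomorphism of port-numbered symmetric digraphs is in particular an isomorphism of the underlying symmetric digraphs, which would contradict the properness of the original covering $\varphi$. I do not expect a real obstacle; the only care required is the bookkeeping around loops and multi-arcs in $D$, which is handled by the $Sym$-preservation of $\varphi$ together with the fact that bijectivity at each vertex is immediate from the covering property.
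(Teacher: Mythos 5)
Your proposal is correct and follows the same route as the argument the paper defers to (the proof of Theorem~4.2 in the cited reference): fix an arbitrary port numbering on the base $D$ and pull it back along $\varphi$, using the bijection that a covering induces on outgoing arcs to get a valid numbering on $G$, with $Sym$-preservation handling loops and multi-arcs and properness inherited from the unlabelled covering. No gaps.
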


\noindent
\textbf{*} These facts are established (among others) within the proof of Theorem~4.2 in~\cite{CMasynj}.\smallskip

\subsection{Mazurkiewicz's Algorithm}
\label{sec:mazurkiewicz}
In this section, we present an algorithm that plays a central role in the interplay between covering theory and distributed computing; yet, it seems that this algorithm is not so well known in the distributed computing community. Among other things, Mazurkiewicz's algorithm makes it possible for the nodes of a network to compute a graph $H$ of which the actual communication graph $G$ is a covering. This algorithm exists in various versions~\cite{Cthese} and was adapted by Chalopin and M\'etivier~\cite{CMasynj} to the case of \emph{symmetric} coverings in the asynchronous message passing model with port numbering. Although it is close in spirit to the computation of views (see e.g.~\cite{compressed}), Mazurkiewicz's algorithm has unique features which make it a powerful building block in other algorithms. In particular, the knowledge of (a bound on) $n$ is only required for explicit termination (whereas the computation of views crucially depends on this parameter to control the size of messages), and perhaps more importantly, the algorithm takes advantage of the asynchrony in the execution, yielding \emph{possibly} more information than the views when symmetries are broken by the communications.

The original version of this algorithm was presented as an enumeration algorithm by Mazurkiewicz~\cite{MazurEnum}, with the purpose of assigning to every node a unique identifier (integer) in a contiguous interval, starting at $1$. This algorithm was formulated in a coarse-grain computational model where a step of computation rewrites atomically the state of a node and that of its entire neighborhood. Mazurkiewicz establishes that the algorithm is guaranteed to succeed if and only if the communication graph $G$ is minimal for the covering relation in the context of simple undirected graphs (see Section~\ref{sec:undirected}). The adaptation of Mazurkiewicz's algorithm in the asynchronous message, hereafter denoted as $\mk$, follows the same lines as the original algorithm. We review here the main properties of $\mk$, in particular the properties useful to the problems considered in this paper. The reader is referred to Appendix~\ref{sec:mazurkiewicz-full} for more details or to the original articles~\cite{CMasynj,CGM12}.

Thus, $\mk$
is a distributed enumeration algorithm in the asynchronous message passing model (with port numbering) whose success is guaranteed in graphs which are minimal for the symmetric covering relation. 
The algorithm is mainly based on exchanging and comparing local configurations among neighbors and collecting everyone's local configurations. The definition of an order between configurations is used by the nodes to pick tentative identifiers which are later changed if a competitor with smaller configuration is discovered. 
The algorithm always terminates implicitly (\ie messages are no longer exchanged, but the nodes are not aware of that), and so, without any knowledge. The termination can be made explicit if the nodes know $n$ (or an upper bound on $n$)~\cite{CGM12}. Upon termination, the collection of configurations define a directed symmetric graph $D$ such that (1) $Dir(G)$ is a symmetric covering of $D$, and (2) $D$ may be any symmetric directed graph of which $Dir(G)$ is a symmetric covering, depending on the particular ordering of events in the execution. As a particular case, if one is lucky, then $D$ may happen to be equal to $Dir(G)$ itself (even if $Dir(G)$ is not minimal). At the other extreme, if one is unlucky, $D$ may be as small as Yamashita and Kameda's quotient graph (obtained from the views). More generally, $D$ may be any intermediate graph in the hierarchy of coverings below $Dir(G)$.



On the complexity side, an execution of
$\mk$ on $G$ takes 
$O(n^2 Diam(G))$ asynchronous rounds and $O(m^2n)$ messages of $O(\Delta \log n)$ bits (Prop.~4.4 in~\cite{CMasynj}), where $n$ is the number of vertices, $m$ the number of edges, $\Delta$ the maximum degree, and $Diam(G)$ the diameter of the network. The memory needed by each node is $O(\Delta n
\log n)$ bits. In some sense, the complexity of $\mk$ is ``natively'' polynomial, whereas that of the computation of views requires truncation and non-trivial compression techniques to become so (see~\cite{Norris} and~\cite{compressed}, respectively).

\section{Feasible Cases for Spanning Tree Construction in terms of Coverings}

In this section, we formulate necessary and sufficient conditions for the feasibility of constructing a spanning tree, provided that the nodes know (an upper bound on) the total number of nodes~$n$. The conditions are expressed in terms of graph coverings.
The added value of using Mazurkiewicz's algorithm together with the proper type of coverings is that these proofs are significantly simpler than their analogs using views and quotient graphs (even though the arguments rely on the same essential properties).

\begin{theorem}
  \label{un}
  \label{lem:spanning-sufficient}
  Let $G$ be the communication graph.
  There exists a distributed algorithm which computes a spanning tree of $G$ for all port numberings $\delta$ if and only if either $Dir(G)$ is minimal for the symmetric covering relation, or $Dir(G)$ is a $2$-sheeted symmetric covering of a symmetric graph $D$ having at least one loop and it is not a $q$-sheeted symmetric covering of a symmetric digraph with $q>2$.
\end{theorem}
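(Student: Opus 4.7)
The plan is to exploit Theorem~\ref{1-2}, which reduces the spanning tree problem to the election of a leader or of a pair of neighbouring co-leaders, and to combine Mazurkiewicz's algorithm $\mk$ with the symmetric lifting argument (Lemma~\ref{lem:ports2}) in the two directions.

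For the sufficient direction I would first run $\mk$, and then distinguish cases according to the cover $H$ that it returns. In case~(a), where $Dir(G)$ is minimal, $\mk$ terminates with globally unique identifiers (by Lemma~\ref{lem:ports1} and the guarantees recalled in Section~\ref{sec:mazurkiewicz}), the node of smallest identifier is declared leader, and Tarry's algorithm then builds a spanning tree. In case~(b), the hypothesis forbids covers of sheet number larger than $2$, so $\mk$ either returns $Dir(G)$ itself (handled as in case~(a)) or a $2$-sheeted base. The key intermediate step is that, under the same hypothesis, every $2$-sheeted symmetric cover of $Dir(G)$ is isomorphic to the loop-bearing $D$ provided in~(b): indeed two non-isomorphic $2$-sheeted bases would, through a fibre-product construction, yield a proper common sub-cover of $Dir(G)$ of sheet number~$4$, contradicting the hypothesis. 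Hence the returned $H$ carries a loop. A node identifies itself as a co-leader when its $\mk$-identifier coincides with that of one of its neighbours and equals the smallest identifier carried by a loop-vertex of $H$; the two co-leaders so selected then trigger Tarry's algorithm.

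For the necessary direction I would assume that neither (a) nor (b) holds. Then $Dir(G)$ admits a proper symmetric cover, and one may choose a base $D$ that is either (i) $q$-sheeted with $q>2$, or (ii) $2$-sheeted and loop-free. By Lemma~\ref{lem:ports2} there exist port numberings for which $(Dir(G),\delta)$ is a proper symmetric cover of $(D,\delta')$. The symmetric lifting argument (implicit in the proof of Theorem~4.2 in~\cite{CMasynj}) then produces, for any candidate algorithm, an asynchronous execution in which vertices lying in a same fibre take indistinguishable computation and message-sending steps throughout. By Theorem~\ref{1-2}, such an execution must single out exactly one vertex or exactly one pair of adjacent vertices. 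Indistinguishability forces the distinguished object to be the full preimage of a vertex or of an edge of $D$: in case~(i) this preimage contains $q>2$ vertices, and the preimage of an edge of $D$ consists of $q$ edges when the image is not a loop and of $q/2\geq 2$ edges when the image is a self-symmetric loop; in case~(ii) the preimage of any edge of $D$ consists of $2$ disjoint edges. In every sub-case more than one leader or more than one adjacent co-leader pair would be distinguished, contradicting Theorem~\ref{1-2}.

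The main obstacle I anticipate is the isomorphism-uniqueness step in case~(b) of sufficiency, namely, showing rigorously that the non-$q>2$ hypothesis implies that the $2$-sheeted symmetric cover of $Dir(G)$ is unique up to isomorphism. Making this lattice-of-coverings argument precise in the symmetric-digraph setting --- for instance via an explicit fibre-product construction of symmetric coverings rather than by invoking fundamental-group lore --- is the delicate point of the argument; everything else follows mechanically from Mazurkiewicz's algorithm, Tarry's algorithm, and the symmetric lifting machinery.
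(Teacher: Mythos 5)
Your proposal is correct and follows essentially the same route as the paper: the reduction of Theorem~\ref{1-2}, Mazurkiewicz's algorithm together with Lemmas~\ref{lem:ports1} and~\ref{lem:ports2} for sufficiency, and the lifting argument for necessity. The one step you flag as delicate is resolved more simply in the paper by composition of coverings --- if the $2$-sheeted base $D$ were not minimal it would properly cover a smaller digraph, making $Dir(G)$ a $q$-sheeted covering with $q\geq 4$, so every proper base is the (unique) minimal one and carries the loop --- and note that the construction you would need there is a quotient/pushout of the two covering projections rather than a fibre product, which builds common coverings, not common bases.
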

\begin{proof}
  The proof relies on the equivalence between spanning tree construction and the problem of electing a leader or two coleaders (see Theorem~\ref{1-2} in Section~\ref{sec:problems}), combined with algorithm $\mk$ (see Section~\ref{sec:mazurkiewicz}).
  
  \emph{Sufficient side:}
  If $Dir(G)$ is minimal for the symmetric covering relation, then so is $(Dir(G),\delta)$ for all $\delta$ (Lemma~\ref{lem:ports1}), thus an execution of $\mk$ results in a successful enumeration of the vertices. In particular, the (unique) vertex with identifier $1$ can be considered as $leader$ in $G$.
If $Dir(G)$ is a $2$-sheeted symmetric covering of a symmetric digraph $D$ that has at least one loop
and it is not a $q$-sheeted symmetric covering of a symmetric digraph with $q>2$, then either $(Dir(G),\delta)$ is minimal (i.e. $\delta$ breaks the symmetry) and the previous argument applies, or $(Dir(G),\delta)$ is a $2$-sheeted symmetric covering of a digraph $(D,\delta')$ that has at least one loop (Lemma~\ref{lem:ports2}). In the latter case, $D$ must be minimal because $Dir(G)$ is not a $q$-sheeted symmetric covering of a symmetric digraph with $q>2$. Depending on communication asynchrony, either $\mk$ will enumerate the vertices of $G$ (and a $leader$ is again successfully distinguished), or it enumerates the vertices of $D$ with multiplicity $2$. As $D$ contains at least one loop (say, on vertex $v$), there must exist two neighbor vertices in $G$ (the inverse images of $\varphi^{-1}(v)$ of $v$) which end up with the same identifier. Recalling that $\mk$ outputs the same identified graph at all the nodes, it suffices to choose
 as $co$-$leaders$ the two neighbors in $G$ that $\varphi$ sends to the smallest such vertex in $D$.


\emph{Necessary side:}
Using the equivalence in the other direction, the goal is to prove that if the condition is not satisfied, then it is impossible to elect a single leader or a single pair of coleader in $G$.
If the condition is not satisfied, then $(Dir(G),\delta)$ is either a $q$-sheeted covering of a digraph $(D,\delta')$ with $q > 2$ or a $2$-sheeted symmetric covering of a \emph{loopless} digraph $(D,\delta')$ (both options are not mutually exclusive, but this fact is unimportant). In the first case, a lifting argument shows that the execution of the algorithm (whatever it be) may generate at least three $leader$ states (or three pairs of $co$-$leader$ states, in case of loops in $D$); in the second case, it may generate twice the $leader$ state (in a non-detectable way).
\qed
\end{proof}



\section{Feasible Cases for Topology Recognition in terms of Coverings}

In this section, we revisit the characterization of feasible cases for the topology recognition problem in terms coverings. Precisely, we characterize the conditions that a graph $G$ must satisfy to be recognizable unambiguously by a distributed algorithm. Analogs of these characterizations are given in~\cite{YKsolvable} in terms of views and quotient graphs, together with a sufficient condition in terms of coverings. The condition we present here is both necessary and sufficient in terms of coverings. The necessary side is established through lifting arguments, and the sufficient side follows from the powerful features of Mazurkiewicz's algorithm (presented in Section~\ref{sec:mazurkiewicz}). Some consequences of using this algorithm rather than the views are discussed. Finally, we construct a $28$-vertex graph that shows that Yamashita and Kameda's sufficient condition in terms of covering was indeed not necessary (as they expected themselves).

\subsection{Some knowledge is required}
\label{sec:knowledge}
Let us start by recovering, using a standard lifting argument, the necessity of considering additional knowledge at the nodes in order to make the problem solvable.
The proof adapts that of Theorem 5.5 in~\cite{Angluin} for simple undirected graphs in the context of Angluin's coarse-grained atomic model.

\begin{proposition}\label{noinformation}
  Let $D_1$ and $D_2$ be two symmetric
  digraphs. If $D_1$ is a proper symmetric 
covering of $D_2$, then no distributed algorithm can solve the topology recognition problem in both $D_1$ and $D_2$ without additional knowledge. 
\end{proposition}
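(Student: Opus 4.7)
The plan is to argue by contradiction with a classical lifting argument. Suppose that some distributed algorithm $\mathcal{A}$ solves topology recognition on both $D_1$ and $D_2$ without any additional knowledge and for all port numberings, and let $\varphi\colon D_1 \to D_2$ denote the proper symmetric covering. The first step is to promote $\varphi$ to a covering that also preserves port numbers: the natural analog of Lemma~\ref{lem:ports2} for arbitrary symmetric digraphs (its proof from~\cite{CMasynj} goes through verbatim, not relying on $D_1$ being of the form $Dir(G)$) yields port numberings $\delta_1$ on $D_1$ and $\delta_2$ on $D_2$ under which $(D_1,\delta_1)$ is a proper symmetric covering of $(D_2,\delta_2)$. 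In particular, each vertex $u\in D_1$ is locally indistinguishable, degree and port labels included, from its image $\varphi(u)\in D_2$.

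The second step is the lifting itself. Given any terminating execution $E_2$ of $\mathcal{A}$ on $(D_2,\delta_2)$, I would construct an execution $E_1$ on $(D_1,\delta_1)$ by replaying at every $u\in\varphi^{-1}(v)$, on the port corresponding via $\varphi$, each computation step, send, and receive performed by $v$ in $E_2$, while pairing each send fiber-wise with its matching receive using the involution $Sym$. A routine induction on the event order then shows that $E_1$ is a legal asynchronous execution (FIFO and reliability are preserved because $\varphi$ is a symmetric covering), and that throughout $E_1$ each $u\in D_1$ carries exactly the same local state as $\varphi(u)$ in $E_2$.

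It follows that every vertex of $(D_1,\delta_1)$ outputs the same graph $G'$ as the vertices of $(D_2,\delta_2)$; correctness on $D_2$ forces $G'$ to be (isomorphic to) $D_2$, while correctness on $D_1$ would require $G'$ to be (isomorphic to) $D_1$. Since $\varphi$ is proper we have $|V(D_1)|=q\,|V(D_2)|$ with $q\geq 2$, so the two graphs are non-isomorphic and the contradiction is reached. The main subtlety I anticipate is the formal justification of the lifting in the asynchronous message passing model, especially the coherent matching of sends and receives across fibers while preserving FIFO order on each channel; this is the place where preservation of $Sym$ and of port numbers by $\varphi$ both play a genuine role, and where care is needed to make sure the simulated schedule is admissible as an asynchronous execution of $\mathcal{A}$.
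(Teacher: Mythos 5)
Your proposal is correct and follows essentially the same route as the paper: a standard lifting argument showing that some execution of $\mathcal{A}$ on $D_1$ mirrors, fibre by fibre, an execution on $D_2$, so both produce the same output graph, contradicting correctness since the proper covering forces $|V(D_1)|=q\,|V(D_2)|$ with $q\ge 2$. The paper's version is terser (it does not spell out the port-numbering step via Lemma~\ref{lem:ports2} or the FIFO/scheduling details you rightly flag as the delicate point), but the underlying argument is the same.
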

\begin{proof}
Let $D_1$ be a proper symmetric covering of $D_2$ via $\varphi$.
Suppose that there exists a distributed algorithm $\cal A$ that recognizes 
successfully the topologies of $D_1$ and $D_2$.
By a lifting argument, there exists an execution where every step applied by $\cal A$ 
on a vertex $u$ of $D_1$ is also applied on $v=\varphi(u)$ and on the vertices of 
$\varphi^{-1}(v)$. As a result, there exists an execution of $\cal A$ that will recognize the same graph in $D_1$ and in $D_2$.
\qed
\end{proof}

\begin{corollary}
If $G$ is a graph such that
$Dir(G)$ is not minimal for the symmetric covering relation, then there
is no distributed
algorithm without additional knowledge for computing
the topology of $G$.
\end{corollary}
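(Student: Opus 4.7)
The plan is to derive the corollary as an almost immediate consequence of Proposition~\ref{noinformation}. First, unpacking the hypothesis: saying that $Dir(G)$ is not minimal for the symmetric covering relation means, by definition, that there exists a symmetric digraph $D$ together with a proper symmetric covering $\varphi : Dir(G) \to D$. So the situation is set up exactly as required by the proposition, with $D_1 := Dir(G)$ and $D_2 := D$.

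Applying Proposition~\ref{noinformation} to this pair tells us that no distributed algorithm (without additional knowledge) can solve the topology recognition problem in both $D_1$ and $D_2$. In particular, no such algorithm can solve it in $D_1 = Dir(G)$. Since the asynchronous message passing model views the undirected communication graph $G$ as the symmetric digraph $Dir(G)$ equipped with a port numbering (as recalled in the Network Model section), computing the topology of $G$ coincides with computing the topology of $Dir(G)$. Thus no such algorithm exists for $G$, which is the desired conclusion.

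\paragraph{Where the work really is.} There is essentially nothing to prove here beyond invoking the proposition; the substantive content (the lifting argument producing an execution on $D$ indistinguishable from an execution on $Dir(G)$) is already carried out in the proof of Proposition~\ref{noinformation}. The only minor point to watch, if one wants to be pedantic, is that $D$ need not itself be of the form $Dir(G')$ for some undirected $G'$; however, this is harmless because the proposition is formulated directly at the level of symmetric digraphs, and the contradiction is obtained from the fact that $|V(D)| < |V(Dir(G))|$ forces the common output of the two executions to disagree with the size of~$D$.
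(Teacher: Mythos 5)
Your overall route is the paper's: the corollary is meant to be an immediate instance of Proposition~\ref{noinformation} with $D_1:=Dir(G)$ and $D_2:=D$, where $D$ witnesses non-minimality, and your remark that $D$ need not be of the form $Dir(G')$ but that the proposition is stated at the level of symmetric digraphs is exactly the right observation. One step, however, is logically backwards as written: from the proposition's conclusion ``no algorithm solves the problem in \emph{both} $D_1$ and $D_2$'' you write ``in particular, no such algorithm can solve it in $D_1$.'' That inference is invalid---$\neg(A\wedge B)$ does not yield $\neg A$; an algorithm could a priori succeed on $Dir(G)$ while failing on $D$. The missing (one-sentence) link is the meaning of ``without additional knowledge'': such an algorithm is a single procedure that cannot be tailored to its input network (hard-coding $G$ would itself constitute knowledge of $G$), so a knowledge-free algorithm claimed to compute the topology of $G$ must be correct on every network on which it is run, in particular on both $Dir(G)$ and $D$. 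It is \emph{that} algorithm which contradicts Proposition~\ref{noinformation}, not an algorithm whose correctness is only asserted on $Dir(G)$. With this clause made explicit, your derivation is the intended one.
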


For these reasons, both Angluin~\cite{Angluin} and Yamashita and Kameda~\cite{YKsolvable} further assume that the nodes have limited knowledge of $G$, namely its number of vertices $n$. We follow them in this choice.

\subsection{A necessary and sufficient condition in terms of coverings}

Let $G$ be the actual communication graph and $Dir(G)$ be the corresponding symmetric digraph. We assume that the nodes know the number $n$ of participants in the network (vertices of $G$). The goal of the recognition problem is that the nodes compute a graph which is isomorphic to $Dir(G)$. We characterize below a necessary and sufficient condition to do this unambiguously, in terms of symmetric directed coverings. Intuitively, the nodes are susceptible of computing any graph $D$ of which $Dir(G)$ is a covering. (The exact argument considers port numbers as well.) The knowledge of $n$ allows them to avoid ambiguities related to the size of $Dir(G)$. But this is not always sufficient, as several non-isomorphic graphs of the same size may be symmetric directed covering of a same digraph $D$. The condition thus requires unicity of such a graph.

%

Starting with the necessary side, the following proposition states that if $Dir(G)$ is a 
$q$-sheeted symmetric
covering of a symmetric digraph $D$ such that $D$ has two non-isomorphic
$q$-sheeted symmetric coverings, then $G$ cannot be recognized.

\begin{proposition}
  \label{lem:topology-necessary}
Let $D_1$ and $D_2$ be two non-isomorphic connected symmetric digraphs. Let $q$ be a positive integer.
If there exists
a symmetric digraph $D$ 
 such that
$D_1$ and $D_2$ are $q$-sheeted symmetric coverings of $D$,
then there exists no distributed algorithm for solving the topology
recognition problem for all port numberings in $D_1$ and $D_2$, even if the number of vertices $n$ of $D_1$ and $D_2$ is known.
\end{proposition}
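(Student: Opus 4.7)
The strategy is the classical lifting argument from Angluin and its adaptation to symmetric directed coverings, applied in parallel to \emph{both} coverings of the common base $D$. I would organize the plan in three steps.

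First, a uniformity observation: since $D_1$ and $D_2$ are both $q$-sheeted symmetric coverings of the same digraph $D$, they have the same number of vertices $n=q\cdot|V(D)|$ (and also the same number of arcs). Hence the extra knowledge of $n$ provided to the nodes cannot help them distinguish the two situations. Note also that $q\geq 2$, since $q=1$ would force $D_1\cong D\cong D_2$.

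Second, I would invoke Lemma~\ref{lem:ports2} (the natural extension to $q$-sheeted coverings of the argument appearing in Theorem~4.2 of~\cite{CMasynj}) twice to obtain a port numbering $\delta$ on $D$ together with port numberings $\delta_1$ on $D_1$ and $\delta_2$ on $D_2$ such that $(D_1,\delta_1)$ and $(D_2,\delta_2)$ are port-preserving symmetric coverings of $(D,\delta)$ through covering maps $\varphi_1$ and $\varphi_2$. Assume, for contradiction, that some algorithm $\mathcal{A}$ solves topology recognition under these port numberings in both $D_1$ and $D_2$.

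Third, the lifting: given any asynchronous execution of $\mathcal{A}$ on $(D,\delta)$, one can schedule executions of $\mathcal{A}$ simultaneously on $(D_1,\delta_1)$ and on $(D_2,\delta_2)$ such that whenever $\mathcal{A}$ performs a local step (receive/compute/send via port $p$) at a vertex $x\in V(D)$, the same step is performed at every vertex in the fibres $\varphi_1^{-1}(x)$ and $\varphi_2^{-1}(x)$. Because both $\varphi_1$ and $\varphi_2$ are port-preserving symmetric coverings, by induction on computation steps the local state of any vertex $u\in\varphi_i^{-1}(x)$ coincides with the local state of $x$, and therefore the outputs of $\mathcal{A}$ at the vertices of $\varphi_1^{-1}(x)$ and of $\varphi_2^{-1}(x)$ are all identical. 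The recognition hypothesis forces this common output to be isomorphic to $D_1$ \emph{and} to $D_2$, contradicting $D_1\not\cong D_2$.

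The only delicate point is the parallel lifting in step three: one has to schedule the asynchronous deliveries on $D_1$ and $D_2$ coherently with a chosen schedule on $D$, while respecting FIFO and preserving the correspondence between fibres. The symmetry and port-preservation of $\varphi_1,\varphi_2$ reduce this to a straightforward induction (arcs in a fibre are in bijection with arcs incident to the base vertex, preserving ports and the involution $\mathit{Sym}$), so the match of local histories is maintained step by step. With that in place, the contradiction is immediate and the proposition follows.
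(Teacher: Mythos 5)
Your proposal is correct and follows essentially the same route as the paper: invoke Lemma~\ref{lem:ports2} to obtain port numberings preserving both covering relations onto the common base $D$, then run the lifting argument through $D$ so that corresponding executions on $(D_1,\delta_1)$ and $(D_2,\delta_2)$ produce identical outputs, contradicting $D_1\not\cong D_2$. The preliminary remarks (that $n=q\cdot|V(D)|$ is the same for both, and that the delicate point is the coherent asynchronous scheduling across fibres) are accurate and consistent with the paper's treatment.
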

\begin{proof}
Let $D_1$ and $D_2$ be two non-isomorphic connected symmetric digraphs on $n$ vertices. 
Let $D$ be a symmetric digraph such that
$D_1$ and $D_2$ are $q$-sheeted symmetric coverings of $D$ via
$\varphi_1$ and $\varphi_2$, respectively. By Lemma~\ref{lem:ports2}, there exists three port numbering functions $\delta, \delta_1$, and $\delta_2$ that preserve these covering relations from $(D_1,\delta_1)$ and $(D_2,\delta_2)$ to $(D,\delta)$.
Suppose that there exists a distributed algorithm $\cal A$ that recognizes 
$(D_1,\delta_1)$. By a lifting argument,
there exists an execution such that, every time a step is applied by $\cal A$ 
on a vertex $u$ of $(D_1,\delta_1)$, the same step
is applied on $v=\varphi_1(u)$, on the vertices of 
$\varphi_1^{-1}(v)$ and on the vertices of $\varphi_2^{-1}(v)$.
As a result, there exists an execution of $\cal A$ in $(D_2,\delta_2)$ that wrongly recognizes $(D_1,\delta_1)$. 
\qed
\end{proof}

The sufficient side relies on Mazurkiewicz's algorithm,

\begin{proposition}
  \label{lem:topology-sufficient}
Let $G=(V,E)$ be a communication graph. 
If all the symmetric digraphs $D$ such that $Dir(G)$ 
is a $q$-sheeted symmetric covering
of $D$ admit a unique $q$-sheeted symmetric covering that is simple, connected, and has $n$ vertices, 
then there exists a distributed algorithm
solving topology recognition for all port numberings $\delta$ in $(G,\delta)$ with the knowledge of $n=|V|$.
\end{proposition}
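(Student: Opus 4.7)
The plan is to use Mazurkiewicz's algorithm $\mk$ (Section~\ref{sec:mazurkiewicz}) as the computational engine and then reconstruct $Dir(G)$ by a purely local post-processing at every node.

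First, every node runs $\mk$ on $(G,\delta)$. Since the nodes know $n$, termination can be made explicit, and upon termination each node holds the same symmetric digraph $D$ such that $Dir(G)$ is a symmetric covering of $D$. Each node computes $q = n/|V(D)|$, which is well-defined since the number of vertices of a covering is always a multiple of that of its base; it is exactly the number of sheets.

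Second, each node enumerates, locally and without further communication, all graphs (up to isomorphism) that are simple, connected, $n$-vertex, $q$-sheeted symmetric coverings of $D$. By Reidemeister's theorem, this enumeration is effective: fix any spanning tree $T$ of $D$, take $q$ copies of $T$, and exhaust the families $\Sigma$ of permutations on $[1,q]$ indexed by the arcs of $D\setminus T$, subject to the symmetric-covering constraint $\sigma_{(u,v)}=\sigma_{(v,u)}^{-1}$; the resulting graphs are then filtered for simplicity, connectedness, and size $n$, and grouped up to isomorphism. By the hypothesis, applied to the particular $D$ returned by $\mk$, exactly one graph $G'$ survives this filtering. Since $Dir(G)$ is itself a simple, connected, $n$-vertex $q$-sheeted symmetric covering of $D$, we must have $G'$ isomorphic to $Dir(G)$, and each node outputs the underlying undirected graph of $G'$, which is isomorphic to $G$.

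The main subtlety is that the digraph $D$ produced by $\mk$ depends on the ordering of asynchronous events, so different executions may yield different bases. But the hypothesis quantifies over \emph{every} symmetric digraph $D$ of which $Dir(G)$ is a symmetric covering, i.e.\ the entire family from which $\mk$ may choose; uniqueness therefore holds for whichever $D$ is actually produced, and the algorithm is correct regardless of the scheduling. Everything beyond invoking $\mk$ is a purely local combinatorial reconstruction, so no additional distributed reasoning is needed. \qed
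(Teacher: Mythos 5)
Your proof is correct and follows essentially the same route as the paper's: run Mazurkiewicz's algorithm to obtain a common base $(D,\delta')$, then locally enumerate its $q$-sheeted symmetric coverings via Reidemeister's construction and select the unique simple connected one on $n$ vertices, which must be $Dir(G)$. Your explicit treatment of the dependence of $D$ on the asynchronous schedule (and why the universal quantification over $D$ in the hypothesis absorbs it) is a welcome addition that the paper leaves implicit.
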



\begin{proof}
  The proof follows from the properties of algorithm $\mk$. In particular, given $(G,\delta)$, $\mk$ outputs a symmetric digraph $(D,\delta')$ such that $(G, \delta)$ is a $q$-sheeted symmetric covering of $(D,\delta')$ for some $q$. Clearly, if $G$ is uniquely determined from $D$, then $G$ is also uniquely determined from $(D, \delta')$. What remains to be done is to show that the nodes can now effectively reconstruct $G$ such that $Dir(G)$ is a $q$-sheeted symmetric directed covering of $D$. Using reidemeister's construction (see Section~\ref{sec:undirected}), the nodes can locally enumerate all the $q$-sheeted symmetric coverings of $D$ and select the first one that is both simple and connected. Since this graph is guaranteed to be unique, it must be isomorphic to $G$. 
\end{proof}

Putting Propositions~\ref{lem:topology-necessary} and~\ref{lem:topology-sufficient} together, we obtain:

\begin{theorem}\label{deux}
Let $G=(V,E)$ be a communication graph.
There exists a distributed algorithm
solving the topology recognition problem for all port numberings $\delta$ in $(G,\delta)$ with the knowledge of $n=|V|$
if and only if any symmetric digraph $D$ such that $Dir(G)$ 
is a $q$-sheeted symmetric covering
of $D$ admits exactly one simple connected $q$-sheeted symmetric covering.
\end{theorem}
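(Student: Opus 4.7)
The plan is to obtain Theorem~\ref{deux} as an immediate consequence of Propositions~\ref{lem:topology-necessary} and~\ref{lem:topology-sufficient}, after one small bookkeeping observation: any $q$-sheeted symmetric covering of a symmetric digraph $D$ has exactly $q\cdot|V(D)|$ vertices, so once $Dir(G)$ is fixed as one such covering (on $n$ vertices), every other $q$-sheeted symmetric covering of the same $D$ automatically has $n$ vertices. Consequently, the condition ``exactly one simple connected $q$-sheeted symmetric covering'' appearing in the theorem and the condition ``a unique $q$-sheeted symmetric covering that is simple, connected, and has $n$ vertices'' appearing in Proposition~\ref{lem:topology-sufficient} describe the same requirement for every $D$ having $Dir(G)$ among its $q$-sheeted symmetric coverings.

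For the sufficient direction I would invoke Proposition~\ref{lem:topology-sufficient} verbatim: under the uniqueness hypothesis of the theorem, the algorithm built from $\mk$ followed by Reidemeister's enumeration of $q$-sheeted coverings recovers a graph isomorphic to $G$ at every node. For the necessary direction I would argue by contrapositive. If the condition fails, there exist a positive integer $q$ and a symmetric digraph $D$ such that $Dir(G)$ is a $q$-sheeted symmetric covering of $D$ while $D$ admits another simple connected $q$-sheeted symmetric covering $D'$ not isomorphic to $Dir(G)$. By the vertex-count observation, $D'$ has $n$ vertices; being a simple (hence loopless) symmetric digraph, it equals $Dir(G')$ for some simple connected undirected graph $G'$ on $n$ vertices. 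Applying Proposition~\ref{lem:topology-necessary} with $D_1=Dir(G)$ and $D_2=D'$ then shows that no distributed algorithm can recognize both topologies even knowing $n$, and in particular no algorithm recognizes $G$ for all port numberings.

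The only point needing care---and the nearest thing to an obstacle---is the quantifier alignment between the three statements together with the implicit translation between an undirected graph $G$ and its symmetric digraph $Dir(G)$ at both the algorithm level (an algorithm outputting a graph isomorphic to $G$ is equivalent to one outputting a digraph isomorphic to $Dir(G)$) and the covering level (verifying that the $D'$ extracted above really corresponds to a legitimate communication graph). Beyond this routine verification, no new technical ingredient is needed.
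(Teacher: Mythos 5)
Your proposal is correct and takes essentially the same approach as the paper, whose entire proof of Theorem~\ref{deux} is the single sentence ``Putting Propositions~\ref{lem:topology-necessary} and~\ref{lem:topology-sufficient} together, we obtain'' the result. The bookkeeping you supply---that every $q$-sheeted symmetric covering of $D$ has $q\cdot|V(D)|=n$ vertices (so the two uniqueness conditions coincide), the contrapositive for necessity, and the identification of a simple connected symmetric digraph with $Dir(G')$ for some communication graph $G'$---is precisely the routine verification the paper leaves implicit.
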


\begin{figure}[h]
  \centering
  \begin{tikzpicture}[xscale=2.5,yscale=3.5]
    \path (0.6,2) node (n8cross){
      \begin{tikzpicture}[scale=.9]
        \tikzstyle{every node}=[defnode]
        \path (0, 0) node (a){};
        \path (0, 1) node (b){};
        \path (1, 0) node (c){};
        \path (1, 1) node (d){};
        \path (2, 0) node (f){};
        \path (2, 1) node (e){};
        \path (3, 0) node (h){};
        \path (3, 1) node (g){};
        \draw (a) -- (c);
        \draw (e) -- (g);
        \draw (a) -- (g);
        \draw (b) -- (d) -- (f) -- (h) ;
        \draw (a) -- (b);
        \draw (c) -- (f);
        \draw (e) -- (d);
        \draw (g) -- (h);
      \end{tikzpicture}
    };
    \path (4,2) node (n8triangle){
      \begin{tikzpicture}[scale=.9]
        \tikzstyle{every node}=[defnode]
        \path (0, 0) node (a){};
        \path (0, 1) node (b){};
        \path (1, 0) node (c){};
        \path (1, 1) node (d){};
        \path (2, 0) node (e){};
        \path (2, 1) node (f){};
        \path (3, 0) node (g){};
        \path (3, 1) node (h){};
        \draw (a) -- (c) -- (e) -- (g) ;
        \draw (b) -- (d) -- (f) -- (h) ;
        \draw (a) -- (b);
        \draw (a) -- (d);
        \draw (e) -- (h);
        \draw (g) -- (h);
      \end{tikzpicture}
    };
    \path (2.4,2) node (n8bend){
      \begin{tikzpicture}[scale=.9]
        \tikzstyle{every node}=[defnode]
        \path (0, 0) node (a){};
        \path (0, 1) node (b){};
        \path (1, 0) node (c){};
        \path (1, 1) node (d){};
        \path (2, 0) node (e){};
        \path (2, 1) node (f){};
        \path (3, 0) node (g){};
        \path (3, 1) node (h){};
        \draw (a) -- (c);
        \draw (e) -- (g);
        \draw (b) -- (d);
        \draw (f) -- (h);
        \draw (a) -- (b);
        \draw (c) -- (d);
        \draw (e) -- (f);
        \draw (g) -- (h);

        \draw (a) edge[bend right] (e);
        \draw (d) edge[bend left] (h);
      \end{tikzpicture}
    };
    
    \path (1.1,1) node (n4loop){
      \begin{tikzpicture}
        \tikzstyle{every node}=[defnode]
        \path (0, 0) node (a){};
        \path (1, 0) node (b){};
        \path (0, 1) node (c){};
        \path (1, 1) node (d){};
        \draw (a) edge[thick,loopsw] (a);
        \draw (d) edge[thick,loopne] (d);
        \draw (a) -- (b);
        \draw (b) -- (d);
        \draw (d) -- (c);
        \draw (c) -- (a);
      \end{tikzpicture}
    };
    \path (3.7,1) node (n4simple){
      \begin{tikzpicture}
        \tikzstyle{every node}=[defnode]
        \path (0, 0) node (a){};
        \path (1, 0) node (b){};
        \path (0, 1) node (c){};
        \path (1, 1) node (d){};
        \draw (a) -- (b);
        \draw (a) -- (c);
        \draw (a) -- (d);
        \draw (b) -- (d);
        \draw (a) -- (d);
        \draw (c) -- (d);
      \end{tikzpicture}
    };
    \path (2.4,1) node (n4multi){
      \begin{tikzpicture}
        \tikzstyle{every node}=[defnode]
        \path (0, 0) node (a){};
        \path (1, 0) node (b){};
        \path (0, 1) node (c){};
        \path (1, 1) node (d){};
        \draw (a) -- (c);
        \draw (a) edge[bend left] (b);
        \draw (a) edge[bend right] (b);
        \draw (c) edge[bend left] (d);
        \draw (c) edge[bend right] (d);
      \end{tikzpicture}
    };
  
    \path (2.4,0) node (n2){
      \begin{tikzpicture}
        \tikzstyle{every node}=[defnode]
        \path (0, 0) node (a){};
        \path (1, 0) node (b){};
        \draw (a) edge[thick,loopleft] (a);
        \draw (a) edge[bend left] (b);
        \draw (a) edge[bend right] (b);
      \end{tikzpicture}
    };

    \tikzstyle{every path}=[]
    \path (-.1,1.5) coordinate (l);
    \path (4.9,1.5) coordinate (r);
    

    \tikzstyle{every node}=[below=-1pt,font=\normalsize]
    \path (n8cross.south) node {$G_5$};
    \path (n8triangle.south) node {$G_7$};
    \path (n8bend.south) node {$G_6$};
    \path (n4loop.south) node {$G_2$};
    \path (n4simple.south) node {$G_4$};
    \path (n4multi.south) node {$G_3$};
    \path (n2.south) node {$G_1$};
    
    \tikzstyle{every path}=[->, gray, thick, shorten >= 14pt, shorten <= 14pt]
    \draw (n8cross) edge (n4loop);
    \draw (n8triangle) edge (n4simple);
    \draw (n8bend) edge (n4simple);
    \draw (n8bend) edge (n4loop);
    \draw (n8bend) edge (n4multi);

    \draw (n4loop) edge (n2);
    \draw (n4simple) edge (n2);
    \draw (n4multi) edge[shorten <= 20pt] (n2);
  \end{tikzpicture}
  
  \caption{\label{fig:hierarchy} A partial hierarchy of symmetric coverings. For simplicity, pairs of symmetric arcs between two vertices are depicted as undirected edges (e.g. $G_3$ actually has $10$ arcs, not $5$). Non-connected coverings of $G_1$ are omitted and non-simple graphs in the third level ($n=8$) of the hierarchy are omitted as well. Finally, port numbers are not represented but they are chosen to preserve symmetries of the underlying graph (Lemma~\ref{lem:ports2}).}
\end{figure}
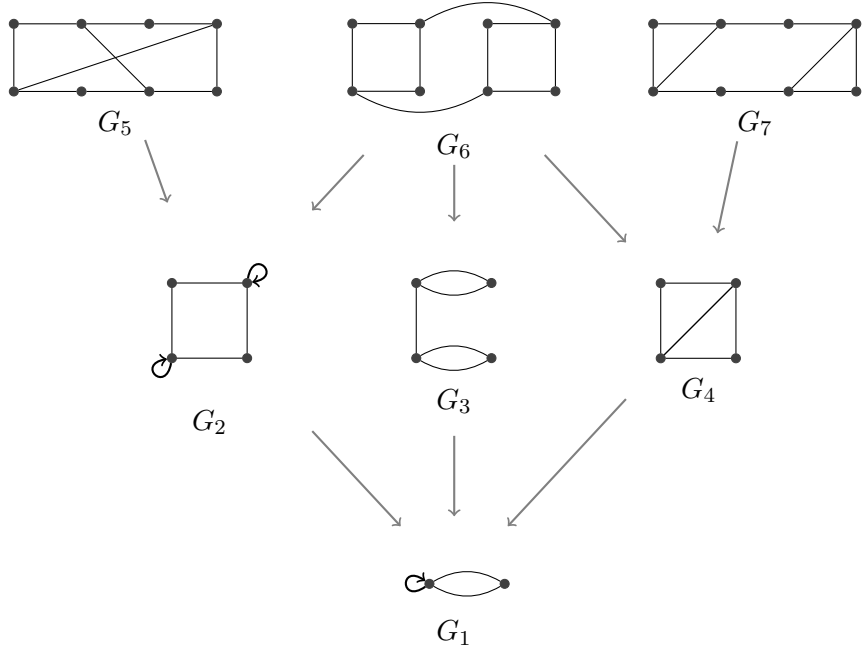

\subsubsection{Example scenarios.}
To illustrate the above characterizations and the special features of Mazurkiewicz's algorithm, let us consider one of the possible communication graphs $G$ and some of the graphs it covers in the hierarchy of Figure~\ref{fig:hierarchy}. Consider, for instance, that $G=G_4$. In this case, $\mk$ computes either $G_1$ or $G_4$ (depending on the effect of asynchrony). If $\mk$ computes $G_4$, then recognition is direct, given the fact that $n=4$ is known. If $\mk$ computes $G_1$, then $G$ is also successfully recognized because $G_4$ is the \emph{only} symmetric covering of $G_1$ on $4$ vertices which is simple and connected. In all the other cases, \ie if $G$ is $G_5, G_6,$ or $G_7$, then recognition may fail because all these graphs are symmetric coverings of $G_1$ (with the same number of sheets). Now, recall that algorithm $\mk$ is susceptible of computing \emph{any} of the graph $D$ such that $G$ is a symmetric covering of $D$, depending on the effect of asynchrony. If the turn of events is fortunate, recognition may still occur. In particular, $\mk$ may output the communication graphs themselves, or some intermediate graph that carries more information than $G_1$. For instance, if $\mk$ outputs $G_3$, whose unique (connected, simple) $2$-sheeted symmetric covering is $G_6$, then this graph will be recognized. This ability to profit from asynchrony is a crucial asset of covering-based characterizations, through Mazurkiewicz's algorithm. It differs from existing approaches based on the views, whose computation is made through semi-synchronous rounds which destroys the effect of asynchrony (resulting essentially to computing the analog of $G_1$ as a quotient graph in all executions).

\subsection{Non-necessity of Yamashita and Kameda's condition}

In~\cite{YKsolvable}, Yamashita and Kameda prove a sufficient condition for the topology recognition problem in terms of coverings. For completeness, the knowledge of $n$ is also assumed and the type of coverings is the same as in Angluin's work~\cite{Angluin}, namely coverings of simple undirected graphs.
The condition is as follows:

  \begin{theorem}[Theorem~23 of~\cite{YKsolvable}]\label{yk}
Given a graph $G$, if there is no graph $H$ with the same size as $G$ such that 
$G$ and $H$ have a finite common covering then there exists a distributed
algorithm for computing the topology of $G$.
\end{theorem}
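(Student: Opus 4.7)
The plan is to derive this statement as a corollary of Theorem~\ref{deux}, arguing by contraposition: assuming that no algorithm solves topology recognition in $G$ given $n=|V(G)|$, I will exhibit a graph $H$ with $|V(H)|=n$ together with a finite undirected graph $K$ that covers both $G$ and $H$ in the sense of Section~\ref{sec:undirected}.

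First, I would extract a rival total graph. By Theorem~\ref{deux}, the failure of recognition means that some symmetric digraph $D$ for which $Dir(G)$ is a $q$-sheeted symmetric covering admits a second simple connected $q$-sheeted symmetric covering $D^{\star}$ not isomorphic to $Dir(G)$. Counting sheets gives $|V(D^{\star})|=q\cdot|V(D)|=n$, and simplicity of $D^{\star}$ forces the involution to have no fixed arc (a fixed arc would be a loop). Hence $D^{\star}=Dir(H)$ for a unique simple undirected graph $H$ on $n$ vertices, with $H\not\cong G$.

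Next, I would build a common undirected covering via the classical fiber product. Let $\varphi_G$ and $\varphi_H$ denote the two symmetric covering maps to $D$, and define $K'=Dir(G)\times_{D}Dir(H)$: its vertices are the pairs $(u,v)$ with $\varphi_G(u)=\varphi_H(v)$, its arcs are the pairs $(a,b)$ with $\varphi_G(a)=\varphi_H(b)$, and the involution acts componentwise. A direct check shows that the two canonical projections are symmetric coverings; restricting to any connected component $K_0$ of $K'$ preserves this. Since $Dir(G)$ is loopless and covering maps send loops to loops, $K_0$ is loopless as well, so its involution has no fixed arc and $K_0=Dir(K)$ for some finite simple undirected graph $K$. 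The graph $K$ is then an undirected covering of both $G$ and $H$, contradicting the hypothesis.

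The main obstacle is the passage between the symmetric directed setting, in which Theorem~\ref{deux} is formulated, and the classical undirected setting of Yamashita and Kameda's statement. The bridge hinges on two observations: that a simple symmetric digraph is always of the form $Dir(\cdot)$, and that a symmetric covering of a loopless digraph is itself loopless. Once these are in place, the fiber product argument is standard and the remaining verifications are routine diagram chases that require no substantial calculation.
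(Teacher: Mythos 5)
Your proposal is correct in substance, but it necessarily takes a different route from the paper, because the paper does not prove this statement at all: it is imported verbatim as Theorem~23 of~\cite{YKsolvable}, whose original proof is carried out in the language of views. What you do instead is derive it as a corollary of Theorem~\ref{deux}, by showing that Yamashita and Kameda's hypothesis implies the paper's covering-theoretic characterization: if that characterization fails, the rival covering $D^{\star}=Dir(H)$ and the fiber product $Dir(G)\times_{D}Dir(H)$ yield a finite common covering of $G$ and a non-isomorphic $H$ of the same size. This is a legitimate and rather illuminating argument --- it makes explicit that the YK condition is at least as strong as the condition of Theorem~\ref{deux}, which is exactly the relationship the paper's $28$-vertex example is designed to show is strict --- and the pullback construction sits naturally in the covering toolbox the paper advertises. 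Two small points deserve a line each: to conclude $K_0=Dir(K)$ for a \emph{simple} $K$ you must rule out multi-arcs as well as loops (two arcs of the fiber product with the same source and target project to two arcs of the simple digraph $Dir(G)$ with the same endpoints, hence agree in the first coordinate, and likewise in the second, so they coincide); and you are implicitly reading the hypothesis of the theorem as quantifying over $H\not\cong G$, which is the only sensible reading since $G$ trivially admits a common covering with itself. Neither point affects the validity of the argument.
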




Although Yamashita and Kameda were aware that this condition must not be necessary, no explicit graph has been constructed so far that illustrates this fact.
We present in Figure~\ref{G1G2} such a graph based on $28$ vertices (which may be the smallest possible counterexample).
\begin{figure}[h]
\usetikzlibrary{fit,calc,arrows.meta}
\newcommand{\convexpath}[2]{
  [   
  create hullcoords/.code={
    \global\edef\namelist{#1}
    \foreach [count=\counter] \nodename in \namelist {
      \global\edef\numberofnodes{\counter}
      \coordinate (hullcoord\counter) at (\nodename);
    }
    \coordinate (hullcoord0) at (hullcoord\numberofnodes);
    \pgfmathtruncatemacro\lastnumber{\numberofnodes+1}
    \coordinate (hullcoord\lastnumber) at (hullcoord1);
  },
  create hullcoords
  ]
  ($(hullcoord1)!#2!-90:(hullcoord0)$)
  \foreach [
  evaluate=\currentnode as \previousnode using \currentnode-1,
  evaluate=\currentnode as \nextnode using \currentnode+1
  ] \currentnode in {1,...,\numberofnodes} {
    let \p1 = ($(hullcoord\currentnode) - (hullcoord\previousnode)$),
    \n1 = {atan2(\y1,\x1) + 90},
    \p2 = ($(hullcoord\nextnode) - (hullcoord\currentnode)$),
    \n2 = {atan2(\y2,\x2) + 90},
    \n{delta} = {Mod(\n2-\n1,360) - 360}
    in 
    {arc [start angle=\n1, delta angle=\n{delta}, radius=#2]}
    -- ($(hullcoord\nextnode)!#2!-90:(hullcoord\currentnode)$) 
  }
}
\begin{center}
\begin{tikzpicture}
  \tikzstyle{every node}=[draw,darkgray,circle,fill,inner sep=1.2pt]
  \path (-1,2) node (v3){};
  \path (1,2) node (v2){};
  \path (v3)+(160:1.3) coordinate (t1);
  \path (v3)+(90:1.3) coordinate (t2);
  \path (v2)+(90:1.3) coordinate (t3);
  \path (v2)+(20:1.3) coordinate (t4);
  \path (0,1) node (v0){};
  \path (0,0) node (v1){};
  \path (-1,-1) node (v4){};
  \path (1,-1) node (v5){};
  \path (-.5,-1.5) node (v6){};
  \path (.5,-1.5) node (v7){};
  \path (0,-2) node (v8){};
  \path (1,-2) node (v9){};
  \path (2,-2) node (v10){};
  \path (0.5,-3) node (v11){};
  \tikzstyle{every node}=[]
  \path (v0) node[right] {$v_0$};
  \path (v0)+(1.5,0) node {\Large $G_1$};
  \path (v1) node[right] {$v_1$};
  \path (v2) node[left] {$v_2$};
  \path (v2) node[above right=.7cm] {\normalsize $C_1$};
  \path (v3) node[right] {$v_3$};
    \path (v3) node[above left=.7cm] {\normalsize $C_1$};
  \path (v4) node[left] {$v_4$};
  \path (v5) node[right] {$v_5$};
  \path (v6) node[above] {$v_6$};
  \path (v7) node[above] {$v_7$};
  \path (v8) node[below right] {$v_8$};
  \path (v9) node[above right] {$v_9$};
  \path (v5) node[xshift=.5cm,above right=.2cm] {\normalsize $C_1$};
 \path (v10) node[right] {$v_{10}$};
  \path (v11) node[below] {$v_{11}$};

  \draw[gray,dashed] \convexpath{t1,t2}{.5cm};
  \draw[gray,dashed] \convexpath{t3,t4}{.5cm};
  \draw[gray,dashed] \convexpath{v4,v5,v10,v11}{.6cm};

  \draw (v3)--(t1);
  \draw (v3)--(t2);
  \draw (v2)--(t3);
  \draw (v2)--(t4);
  \draw (v3)--(v0);
  \draw (v2)--(v0);
  \draw (v0)--(v1);
  \draw (v1)--(v4);
  \draw (v1)--(v5);
  \draw (v4)--(v8);
  \draw (v5)--(v8);
  \draw (v6)--(v7);
  \draw (v8)--(v9);
  \draw (v9)--(v10);
  \draw (v5)--(v10);
  \draw (v9)--(v11);
  \draw (v11) edge[bend left=20] (v4);
  \draw (v11) edge[bend right=20] (v10);
\end{tikzpicture}
\hspace{1cm}
\begin{tikzpicture}
  \tikzstyle{every node}=[draw,darkgray,circle,fill,inner sep=1.2pt]
  \path (-1,2) node (v3){};
  \path (1,2) node (v2){};
  \path (v3)+(160:1.3) coordinate (t1);
  \path (v3)+(90:1.3) coordinate (t2);
  \path (v2)+(90:1.3) coordinate (t3);
  \path (v2)+(20:1.3) coordinate (t4);
  \path (0,1) node (v0){};
  \path (0,0) node (v1){};
  \path (-1,-1) node (v4){};
  \path (1,-1) node (v5){};
  \path (.5,-1.5) node (v6){};
  \path (1.5,-1.5) node (v7){};
  \path (0,-2) node (v8){};
  \path (1,-2) node (v9){};
  \path (2,-2) node (v10){};
  \path (0.5,-3) node (v11){};
  \tikzstyle{every node}=[]
  \path (v0) node[right] {$v_0$};
  \path (v0)+(1.5,0) node {\Large $G_2$};
  \path (v1) node[right] {$v_1$};
  \path (v2) node[left] {$v_2$};
    \path (v2) node[above right=.7cm] {\normalsize $C_2$};
  \path (v3) node[right] {$v_3$};
    \path (v3) node[above left=.7cm] {\normalsize $C_2$};
  \path (v4) node[left] {$v_4$};
  \path (v5) node[right] {$v_5$};
  \path (v6) node[left,yshift=2pt] {$v_6$};
  \path (v7) node[right] {$v_7$};
  \path (v8) node[below right] {$v_8$};
  \path (v9) node[above right] {$v_9$};
  \path (v5) node[xshift=.5cm,above right=.2cm] {\normalsize $C_2$};
  \path (v10) node[right] {$v_{10}$};
  \path (v11) node[below] {$v_{11}$};

 \draw[gray,dashed] \convexpath{t1,t2}{.5cm};
 \draw[gray,dashed] \convexpath{t3,t4}{.5cm};
  \draw[gray,dashed] \convexpath{v4,v5,v10,v11}{.6cm};

  \draw (v3)--(t1);
  \draw (v3)--(t2);
  \draw (v2)--(t3);
  \draw (v2)--(t4);
  \draw (v3)--(v0);
  \draw (v2)--(v0);
  \draw (v0)--(v1);
  \draw (v1)--(v4);
  \draw (v1)--(v5);
  \draw (v4)--(v8);
  \draw (v5)--(v8);
  \draw (v6)--(v7);
  \draw (v8)--(v9);
  \draw (v9)--(v10);
  \draw (v5)--(v10);
  \draw (v9)--(v11);
  \draw (v11) edge[bend left=20] (v4);
  \draw (v11) edge[bend right=20] (v10);
\end{tikzpicture}
\caption{The graphs $G_1$ and $G_2$ illustrate the fact that the sufficient
  condition given in Theorem 23 in \cite{YKsolvable} is not necessary.
  Dashed line in $G_1$ means the duplication of the graph $C_1$
  described below, same for $C_2$ in $G_2$.}
\label{G1G2}
\end{center}
\end{figure}
The two graphs $G_1$ and $G_2$ have the same size, and are non isomorphic.
Furthermore, they are $3$-regular thus they have a finite common
covering by Theorem 1  in \cite{Gardiner}.
Nevertheless,
we will show below that the graphs
$Dir(G_1)$ and $Dir(G_2)$ are minimal for the \emph{symmetric}
covering relation, which implies that a distributed algorithm like Mazurkiewicz's algorithm can recognize (and thus distinguish) them.

Using, among others, Reidemeister's construction (see Section~\ref{sec:undirected}), we will show that $Dir(G_1)$ is minimal for the symmetric covering relation (the same arguments hold for $Dir(G_2)$).
Recall that if $Dir(G_1)$ is a symmetric covering
of the symmetric digraph $D$, then the number of vertices of $D$ must divide that of $Dir(G_1)$  (see, \textit{e.g.} Proposition 5 in~\cite{BVfibrations}), thus it must be either $1$, $2$, $4$, $7$, or $14$. By inspection, we show that $Dir(G_1)$ is minimal in each case.

\begin{itemize}
\item If $|V(D)| = 1$, then $D$ is a $3$-bouquet, and necessary of one the
  loop is its own symmetric. As a result, the inverse image of this
  loop is a perfect matching in $Dir(G_1)$, which is a contradiction because $G_1$ ($Dir(G_1)$) admits no perfect matching.
\item If $|V(D)| = 2$, then once again the inverse image
  of an arc between the two vertices (and its symmetric in $D$) must form a perfect matching of
  $Dir(G_1)$, which gives the same contradiction.
\item If $|V(D)| = 4$, then $Dir(G_1)$ can be viewed
  as a symmetric covering of $D$ through the construction of
  Reidemeister applied to a spanning tree of $D$.  Let $v_0$, $v_1$, $v_2$, and
  $v_3$ be the vertices of $D$ and consider any spanning tree over these vertices. The possible graphs defined by four vertices in
  the subgraph $C_1$ imply that at least one of the images, by the
  covering mapping, of the edges $\{v_0,v_1\}$, $\{v_0,v_2\}$,
  $\{v_0,v_3\}$ is not a bridge in $D$, thus one of the
  edges $\{v_0,v_1\}$, $\{v_0,v_2\}$, $\{v_0,v_3\}$ is not a bridge
  in $G_1$ (contradiction).
\item If $|V(D)| = 7$, then it is impossible to cover $Dir(G_1)$ with four (connected) trees of size~$7$.
\item If $|V(D)| = 14$, then it is impossible to cover $Dir(G_1)$ with two trees of equal sizes.
  
\end{itemize}

\section{Conclusion}

In this paper, we characterized necessary and sufficient conditions for a graph to admit a distributed algorithm for the spanning tree construction and topology recognition problems. The conditions are expressed in terms of graph coverings, which fills a long-standing gap in the distributed computing literature. While carrying ideas similar to those of Yamashita and Kameda's proofs based on the concepts of views and quotient graphs, our proofs invoke general tools in covering theory (Reidemeister's theorem, Mazurviewicz's algorithm, and the lifting lemma) that make the arguments simpler and more natural once these tools are known. An important question that remains open is whether the condition for topology recognition can ultimately be expressed in terms of graph properties which are more elementary than coverings. Also, whether the corresponding property could be tested in polynomial time on a given graph remains an open question, arguably related to the first question.

\bibliographystyle{alpha}
\bibliography{article}

\newpage
\appendix
\section{Constructing a spanning tree is equivalent to electing a leader or a pair of neighbor co-leaders}
\label{sec:coleader}

This appendix exposes the well known fact that the spanning tree construction problem and the edge election problem are equivalent in terms of feasibility. The high-level arguments are recalled in Section~\ref{sec:problems}. We give here the detailed algorithms used in the equivalence.

\begin{definition}
Let {\tt leader} and {\tt co-leader} be two states. 
A vertex is said to be distinguished (or simply to be the leader) if
it is the only vertex in the {\tt leader} state. A pair of
neighbor vertices is said to be distinguished (or to be co-leader) if their states
are {\tt co-leader}, and no other vertex is in this state.
\end{definition}

Algorithm~\ref{algo:coleader} below adapts ideas from Angluin (\cite{Angluin}, Theorem 4.4) to the asynchronous message passing model. Similar arguments are also developed in Yamashita and Kameda~\cite{YKsolvable}. A tree $T$ of size $n>1$ is initially
given.
Informally, the leader  vertex (resp.  the two co-leader vertices) will be the last
 surviving vertex (resp. the last two to be eliminated), following an
$(n-1)$-sequence (resp. a $n$-sequence) of leaf-eliminations. 
 We use the same notation conventions as in Tel
(\cite{Tel} page 192, Algorithm 6.3).

\noindent
\begin{algorithm}
  \label{algo:coleader}
  \caption{Election of a $leader$ or a pair of $co$-$leaders$ in a tree.}
{\bf var }    $Neigh_v$ : set of neighbors of $v$;\\
\hspace*{0.8cm}$rec_v[w]$ for each $w\in Neigh_v$ : boolean {\bf init} false;\\
\hspace*{0.8cm}(* $rec_v[w]$ is true if $v$ has received a message
from $w$ *);\\
{\bf begin}\\
\hspace*{0.5cm}{\bf while} $|\{w: rec_v[w]$ is false\}$|>1$ {\bf do}\\
\hspace*{0.8cm}{\bf begin}\\
\hspace*{1cm}receive $<${\bf tok}$>$ from $w$;\\
\hspace*{1cm}$rec_v[w]:=true$\\
\hspace*{0.8cm}{\bf end}\\
\hspace*{0.5cm}({\small * Now there is exactly 
$one$ $w_0$ with $rec_v[w_0]$ false: the
           vertex $v$ is a leaf *})\\
\hspace*{0.5cm}{\bf if} {a message $<${\bf tok}$>$ has arrived}
	{\bf then}
$state_v:=leader$\\
\hspace*{0.5cm}{\bf else}\\
\hspace*{0.8cm}{\bf begin}\\
\hspace*{1cm}send $<${\bf tok}$>$ to $w_0$ with $rec_v[w_0]$ is false;\\
\hspace*{1.cm}{\bf if} {a message $<${\bf tok}$>$ has arrived}
	{\bf then}
$state_v:=co$-$leader$\\
\hspace*{0.8cm}{\bf end}\\
{\bf end}
\end{algorithm}

\begin{center}
\end{center}

The algorithm is initiated by leaves.
Each vertex waits for messages until it
has received messages from all but one of its neighbors.
When a vertex $v$ has received a message from all but one neighbor,
denoted $v'$,
then it sends a message to $v'$.
A leaf $v$ which has  not received
a message 
 sends a message to its unique neighbor.
 A vertex which has received messages from all but one 
 of its neighbors, denoted $v'$,
 becomes $leader$
 if it receives a message from $v'$ before it, itself, sends a message
 to $v'$.
Hence, the $leader$  is the unique vertex which
received a message from all neighbors without sending
a message (see \cite{Tel} page 193).
The two $co$-$leader$ vertices are the two vertices which receive
a message after they have themselves sent a message.
It follows from this algorithm and Theorem 6.16 \cite{Tel} that:
\begin{proposition}
  Let $T$ be a tree. There exists a distributed algorithm
  which distinguishes  a vertex
or a pair of neighboring vertices.
\end{proposition}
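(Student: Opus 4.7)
The plan is to prove correctness of Algorithm~\ref{algo:coleader} through a global counting identity on the messages exchanged in the tree $T$.

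First, I would establish two basic invariants: (i)~the algorithm terminates at every vertex, and (ii)~each vertex sends at most one token in total. Both facts follow by a peeling induction on $T$. A leaf $\ell$ has exactly one neighbor, so the loop condition $|\{w : rec_v[w] \text{ is false}\}| > 1$ is already false at initialization; thus $\ell$ immediately passes through the if/else branch once, sending a single token to its unique neighbor (or, in the degenerate race arising only when $n = 2$, receiving before sending and becoming {\tt leader}). Once $\ell$'s token reaches its neighbor, the residual tree $T \setminus \{\ell\}$ is strictly smaller from the perspective of the remaining vertices, and by induction every remaining vertex eventually exits its while-loop, traverses the if/else branch exactly once, and halts.

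The same peeling induction shows that every edge of $T$ is traversed by at least one token, since each edge becomes pendant in some residual tree. Now let $m$ be the total number of messages sent during the execution, $L$ the number of vertices in state {\tt leader} at termination, and $k$ the number of edges traversed in both directions. Since each non-leader sends exactly one message, $m = n - L$; since each of the $n - 1$ edges carries at least one and at most two messages, $m = (n - 1) + k$. Combining, $k + L = 1$, and since $k, L \geq 0$ this forces $(L, k) \in \{(1, 0), (0, 1)\}$.

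To conclude, I would read off the two cases from the if/else branch. A vertex enters state {\tt leader} iff it receives a token from every neighbor before sending any, i.e., iff all its incident edges are traversed toward it; a pair of neighbors $\{u, v\}$ enters state {\tt co-leader} iff each sent to the other as its unique last recipient, i.e., iff the edge $\{u, v\}$ is traversed in both directions. Hence $(L, k) = (1, 0)$ yields a unique leader and no co-leader pair, while $(L, k) = (0, 1)$ yields no leader and exactly one pair of co-leaders, at the endpoints of the uniquely doubly-traversed edge. The main obstacle is the counting identity: one must correctly classify each vertex's final status (leader, co-leader, or silent non-leader) and justify that every edge is traversed at least once; once the two equations $m = n - L$ and $m = n - 1 + k$ are in hand, the dichotomy is forced.
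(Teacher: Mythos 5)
Your proof is correct and follows essentially the same route as the paper: the paper presents the same leaf-elimination algorithm and delegates the dichotomy (exactly one leader, or exactly one doubly-traversed edge whose endpoints become co-leaders) to Theorem~6.16 of Tel, whose underlying argument is precisely the message-counting identity $m = n - L = (n-1) + k$ that you spell out. One minor inaccuracy: a leaf can receive before sending (and thus become leader) in any tree, not only when $n=2$ --- e.g., on a path of three vertices whose middle vertex forwards a token to a still-sleeping endpoint --- but this does not affect your counting argument.
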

Thus, given a graph $G$ and a spanning tree $T$ of $G$, there 
exists a distributed algorithm which distinguishes a  vertex
or a pair of neighboring vertices of $G$.

Conversely, let $G$ be a graph with one distinguished vertex or 
two neighboring distinguished vertices. There exists a distributed algorithm
which computes a spanning tree of $G$. For example, Tarry's algorithm 
\cite{Tarry}
presented by Tel (\cite{Tel}, p. 207) computes a spanning tree. It is
formulated in the following two rules:
\begin{enumerate}
\item a process never forwards the token twice through the same channel;
\item a non-$leader$ (or non-$co$-$leader$)
  forwards the token to its parent (neighbor from which it first received it) only if there is no other
channel possible according to the first rule.
\end{enumerate}
The $leader$ vertex (resp. the two $co$-$leader$ vertices) is initiator 
(resp. are initiators) sending the token
the first time to a neighbor (different from a $co$-$leader$ vertex if any).
Algorithm 1 presents a distributed implementation of Tarry's algorithm.

\begin{algorithm}
{\scriptsize
  ${\mathbf I:}$ \{The vertex $v$ $leader$ or  $co$-$leader$ owns $<$\KwSty{token}$>$ and
  initializes the spanning tree computation\}\\ 
   \Begin{\KwSty{send}$<$\KwSty{token}$>$ 
via each  port (different from $co$-$leader$ if any)
}
\BlankLine
${\mathbf R_1:}$ \{Upon receipt $<$\KwSty{token}$>$  by $v$ via the port $q$\}\\
   \Begin{
       \eIf{$v$ owns $<$\KwSty{token}$>$   for the first time}
       {$Tree_v:=Tree_v \cup \{q\}$;\\
         \KwSty{send}$<$\KwSty{in-the-tree}$>$ via the port $q$\\
         \KwSty{send}$<$\KwSty{token}$>$ via each port of $v$ different from $q$
        }
{\KwSty{send}$<$\KwSty{already-in-the-tree}$>$ via the port $q$\\
$other_v:=other_v \cup \{q\}$
}
}
${\mathbf R_2:}$ \{Upon receipt $<$\KwSty{already-in-the-tree}$>$
 by $v$ via the port $q$\}\\
{\Begin{$other_v:=other_v \cup \{q\}$}
}
${\mathbf R_2:}$ \{Upon receipt $<$\KwSty{in-the-tree}$>$
 by $v$ via the port $q$\}\\
{\Begin{$Tree_v:=Tree_v \cup \{q\}$}
}
}
\caption{Spanning tree computation: for each vertex $v$, the variable $Tree_v$
  indicates  which ports of $v$ are in the spanning tree. Its initial value is the empty set
  if there is one $leader$ vertex; if there are two $co$-$leader$ vertices then it is the empty set
  for vertices which are not $co$-$leader$, and for each $co$-$leader$ vertex it contains the port number
  corresponding to its $co$-$leader$ neighbor. The variable
  $other_v$ is the set of ports which are not in the 
spanning tree (its initial value is the empty set).}
\end{algorithm}

\begin{remark}
  Algorithm 1 has an implicit termination. To obtain an explicit termination:
  each vertex remembers its parent (the port number through wich it receives the token the first time)
  and the set of its children (port numbers through which it receives the message $in$-$the$-$tree$).
Then an acknowledgment is initialized by leaves,
and relayed to its parent by a vertex as soon as it has received an acknowledgement from
each child; it  enables the $leader$ vertex  to detect the termination
in the case there is a $leader$ vertex.
If there are two $co$-$leader$ vertices then each $co$-$leader$ vertex sends an acknowledgement
to its neighbor which is a $co$-$leader$ vertex  as soon as it has received an
acknowledgement from each child.
\end{remark}

\section{Message-passing adaptation of Mazurkiewicz's algorithm}\label{algo}
\label{sec:mazurkiewicz-full}

This section describes, for compleness, an algorithm introduced in~\cite{CMasynj}, denoted $\mk$,
which, given a graph $G$ and a port numbering $\delta$ of $G$,  
can be seen as computing a symmetric digraph, denoted $\rho(G,\delta)$,
covered by $(Dir(G),\delta).$ Furthermore $\mk$ enumerates vertices
of $\rho(G,\delta)$.
Conversely, 
every symmetric digraph $D'$ covered by $(Dir(G),\delta)$ and every enumeration
of vertices of $D'$ 
can be obtained by a run $\rho$ of $\mk$. This algorithm is an adaptation to the asynchronous message passing setting, of the coarse-grain algorithm introduced by Mazurkiewicz~\cite{MazurEnum} in a model based on atomic transformations of local stars.

\subsection{A General Description of $\mk$}

We  give a general description of  $\mk$, 
when executed on a connected  simple graph
$G$ with a port numbering $\delta$.
During the execution of the algorithm, each vertex $v$ attempts to get
its own unique identity which is a number between $1$ and
$|V(G)|$. Once a vertex $u$ has chosen a number $n(u)$, it sends it to
each neighbor $v$ with the port number $\delta_u(v)$. When a vertex $u$
receives a message from one neighbor $v$, it stores the number $n(v)$
with the port numbers $\delta_u(v)$ and $\delta_v(u)$. From all 
information it has gathered from its neighbors, each vertex can
construct its \emph{local view} (which is the set of numbers of its
neighbors associated with the corresponding port numbers). Then, a
vertex broadcasts its number with its {\em local view}.  If a vertex
$u$ discovers the existence of another vertex $v$ with the same number
then it should decide if it changes its identity. To this end it
compares its local view with the local view of $v$. If
 the local view of $u$ is ``weaker'' (the order is defined below), then $u$ picks another
number --- its new temporary identity --- and broadcasts it again with
its local view. 
\begin{remark}
At the end of the computation of $\mk$ on a graph $G$, each vertex has a number and
can, thanks to some local information (the set of local views of the vertices of $G$), 
build a graph $\rho(G,\delta)$ such that
$(Dir(G),\delta)$ is a symmetric covering of $\rho(G,\delta)$. 

Thus, in the case where $Dir(G)$ is minimal
for the symmetric covering relation, each vertex can build $G$, 
and every vertex of $G$ 
will have a unique number between $1$ and the size of $G$:
the algorithm is an enumeration (naming) algorithm.

In the case where $(Dir(G),\delta)$ is not minimal and $\rho(G,\delta)$ is not, in general, isomorphic to 
$(Dir(G),\delta)$,
then $(Dir(G),\delta)$ is a symmetric
covering of $\rho(G,\delta)$, each vertex can build $\rho(G,\delta)$ and each vertex
of $\rho(G,\delta)$ (and of $G$) 
 will have a unique number between $1$ and the size of $\rho(G,\delta)$.
\end{remark}
\subsection{Labels}

We consider a network $(G,\delta)$ where $G$ is a
simple  graph and where $\delta$ is a port numbering of
$G$.  During the
execution, the label of each vertex $v$ is a tuple $(n(v),
N(v), M(v))$ representing  following information:
\begin{itemize}
\item $n(v) \in \N$ is the current {\em number} of the vertex $v$
computed by the algorithm;
\item $N(v) \in \NNNN$\footnote{For any set $S$,
${\mathcal{P}}(S)$ denotes the power set of
$S.$} is the {\em local view} of $v$. The local view of a vertex $v$
contains the information a vertex $v$ has about its neighbors. If a
vertex $v$ has a neighbor $u$ such that $\delta_u(v) = p$ and $\delta_v(u)
= q$, then $(m,p,q) \in N(v)$ if the last message that $v$ got from
$u$ indicated that $n(u) = m$;
\item $M(v) \in \N  \times \NNNN$ is the {\em mailbox} of $v$. The
mailbox of $v$ contains all information received by $v$ during the
execution of the algorithm. If $(m,{\cal N}) \in M(v)$, it means that
at some previous step of the execution, there was a vertex $u$ such
that $n(u) = m$,  and $N(u) = {\cal N}$. 
\end{itemize}

Initially, each vertex $v$ has a label of the form
$(0,\emptyset,\emptyset)$ indicating that it has not
choosen any number, that it has no information about its
neighbors or about the other vertices of the graph. 

During an excecution of $\mk$, processes exchange messages of the form  $ < (m,n_{old},M),p >$. 
If a vertex $u$ sends a message $ < (m,
n_{old},M),p >$ to one of its neighbor $v$, then the message contains
the following information:
\begin{itemize}
\item $m$ is the current number $n(u)$ of $u$;
\item $n_{old}$ is the previous number of $u$, i.e., the number $u$
sends to $v$ in its previous message; if in the meanwhile, $u$ has
not modified its number, then $n_{old} = m$;
\item $M$ is the mailbox of $u$;
\item $p$ is the port number the message has been sent through, i.e.,
$ p =\delta_{u}(v)$.
\end{itemize}

\subsection{An Order on Local Views}

As in Mazurkiewicz's algorithm~\cite{MazurEnum}, the nice properties of
the algorithm rely on a total order on local views, i.e., on finite
subsets of $\N^3$. We consider the usual lexicographic order on
$\N^3$: $(n,p,q) < (n',p',q')$ if $n < n'$, or if $n = n'$ and $p <p'$,
or if  $n = n'$, $p = p'$ and $q < q'$.
Then, we use the same order on finite sets as Mazurkiewicz: given two
distinct sets $N_1, N_2 \in \NNNN$, we define $N_1 \prec N_2$ if the
maximum of the symmetric difference $N_1 \bigtriangleup N_2 = (N_1
\setminus N_2) \cup (N_2 \setminus N_1)$ belongs to $N_2$.
If $N(u) \prec N(v)$, then we say that the local view $N(v)$ of $v$ is
\emph{stronger} than the local view $N(u)$ of $u$ and that $N(u)$ is
\emph{weaker} than $N(v)$. 
We denote by $\preceq$ the reflexive closure
of $\prec$.

\subsection{Algorithm $\mk$}

The algorithm for the vertex $v_0$ (see Algorithm \ref{maz}) 
is expressed in an event-driven description (see Tel 
\cite{Tel} p. 553). 
The action ${\mathbf I}$ can be executed by a process on wake-up,
only if it has not received any message. In this case, it chooses the
number $1$, updates its mailbox and informs its neighbors.
The action ${\mathbf R}$ describes the instructions the vertex
$v_0$ has to follow when it receives a message $< (n',n_{old}',M'), p
> $ from a neighbor via port $q$. First, it updates its mailbox by
adding $M'$ to it. Then it modifies its number if there exists
$(n(v_0),{\cal N}) \in M(v_0)$ such that $N(v_0) \prec
{\cal N}$. 
Then, it updates its local view by removing
$(n_{old}',p,q)$ from $N(v_0)$ (if $N(v_0)$ contains such an element)
and by adding $(n',p,q)$ to $N(v_0)$. Then, it adds its new state
$(n(v_0), N(v_0))$ to its mailbox. Finally, if its
mailbox has been modified by the execution of all these instructions,
it sends its number and its mailbox to all its neighbors. 

If the mailbox of $v_0$ is not modified by the execution of the action
${\mathbf R}$, it means that the information $v_0$ has about its neighbor
(i.e., its number) was correct, that all the elements of $M'$ already
belong to $M(v_0),$ and that for each $(n(v_0),{\cal N}) \in M(v_0)$, ${\cal N}
\preceq N(v_0)$.

\begin{algorithm}
${\mathbf I:}$ \{$n(v_0)=0$ and no message has arrived at $v_0$\}\\
\Begin{ $n(v_0):=1$ \; $M(v_0):=\{(n(v_0),\emptyset)\}$\;
\For{$i:=1$ \KwTo $\deg(v_0)$} {\KwSty{send} $<(n(v_0),0,M(v_0)),i>$
via port $i$ \;} }
\BlankLine
${\mathbf R:}$ \{A message $<(n',n_{old}',M'),p>$ has arrived at $v_0$ from port $q$\}\\
   \Begin{
      $M_{old} := M(v_0)$\;
      $n_{old} := n(v_0)$\;
      $M(v_0):= M(v_0)\cup M'$\;
      \If{$n(v_0) = 0$ or $\exists (n(v_0),{\cal N}) \in M(v_0) \mbox{
	  such that } N(v_0) \prec {\cal N}$} 
          {$n(v_0):=1 + \max \{ n \mid  \exists (n,{\cal N}) \in M(v_0)\}$\;}
      $N(v_0):= N(v_0) \setminus \{(n_{old}',p,q)\} \cup \{(n',p,q) \}$\;
      $M(v_0) := M(v_0)\cup \{(n(v_0),N(v_0))\}$\;
      \If{$M(v_0)\neq M_{old}$}
        {\For{$i:=1$ \KwTo $\deg(v_0)$}
	   {\KwSty{send} $<(n(v_0),n_{old},M(v_0)),i>$ through port $i$\;}}   
}
\caption{Algorithm $\mk$.}
\label{maz}
\end{algorithm}

As is explained in \cite{CMasynj}, an execution of  $\mk$ on a graph $G$ always terminates.
The algorithm $\mk$ does not need any information on $G$.
However, the vertices cannot detect that the computation is over 
if they have no
information on the graph: it is an implicit termination.
A global termination detection
can be obtained under the following hypotheses \cite{CGMT07}: the knowedge of the 
size of $G$, or the knowledge of an upper bound on the size of $G$.

\begin{lemma}\label{lem-fundamental}
Any execution $\rho$ of $\mk$ on a simple labeled graph $G$
 with port numbering $\delta$ terminates and the final
labeling $(n_\rho, N_\rho, M_\rho)$ 
satisfies the following properties:
 \begin{enumerate}
   \item there exists an integer  $k \leq |V(G)|$ such that
   $\{n_\rho(v) \mid v \in V(G)\} = [1,k]$, 
 \end{enumerate}
  and for all vertices $v, v' \in V(G)$:
  \begin{enumerate}
   \setcounter{enumi}{1}
   \item $M_\rho(v)=M_\rho(v')$,
   \item $(n_\rho(v),N_\rho(v)) \in M_\rho(v')$,
   \item if $n_\rho(v) = n_\rho(v')$, then   $ N_\rho(v) = N_\rho(v')$, 
   \item $(n,p,q) \in N_\rho(v)$ if and only if there exists
   $w \in N_G(v)$ such that $\delta_v(w) = q$, $\delta_w(v) = p$ and
   $n_\rho(w) = n$.
\end{enumerate}
\end{lemma}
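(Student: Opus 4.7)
The plan is to establish termination first, then verify the five properties in turn, each of them following from a simple monotonicity or propagation invariant of rule ${\mathbf R}$.

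For termination, I would observe that each mailbox $M(v)$ is monotonically non-decreasing with respect to set inclusion, since rule ${\mathbf R}$ only adds elements via union, and that messages are emitted only when the mailbox actually changes (the guard $M(v_0) \neq M_{old}$). Hence, bounding the total number of distinct pairs $(n,\mathcal{N})$ that can ever appear in any mailbox is enough to force every mailbox to stabilize after finitely many updates; combined with FIFO channels, no further messages can be in transit, so the algorithm terminates. The required bound follows from a polynomial argument on the number of renumberings --- essentially the complexity analysis cited from~\cite{CMasynj} --- which in turn bounds the largest number that can ever be selected.

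For properties (1)--(3) I would exploit the fact that rule ${\mathbf R}$ broadcasts the current mailbox to every neighbor whenever the mailbox changes. By connectivity of $G$ and termination, any pair that ever enters any mailbox eventually reaches every mailbox, which yields property~(2). Rule ${\mathbf R}$ also inserts the fresh pair $(n(v_0),N(v_0))$ into $M(v_0)$ before sending, so together with~(2) every vertex's current pair lies in every mailbox, giving property~(3). Property~(1) then follows from the renumbering rule $n(v_0) := 1 + \max\{n \mid (n,\mathcal{N}) \in M(v_0)\}$: no vertex can choose $k+1$ before some vertex has already chosen $k$, so the set of numbers ever selected is a contiguous prefix $[1,k]$, and by~(3) this is exactly the set of numbers present at termination, with $k \leq |V(G)|$.

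Property~(5) is a bookkeeping argument: on each incoming message from port $q$ carrying $(n', n_{old}', M')$, rule ${\mathbf R}$ removes the stale triple $(n_{old}', p, q)$ and inserts the fresh triple $(n', p, q)$; the FIFO assumption on channels, together with the fact that $n_{old}'$ records exactly the number $w$ sent to $v$ in its previous message, allows a straightforward induction showing that at termination the entry stored for each port pair $(p,q)$ correctly reflects the current number of the corresponding neighbor. The hard part will be~(4): suppose for contradiction that $n_\rho(v) = n_\rho(v')$ but $N_\rho(v) \neq N_\rho(v')$, and without loss of generality $N_\rho(v) \prec N_\rho(v')$. By~(2) and~(3) we have $(n_\rho(v'), N_\rho(v')) \in M_\rho(v)$, which rewrites as $(n_\rho(v), N_\rho(v')) \in M_\rho(v)$ with $N_\rho(v) \prec N_\rho(v')$; this satisfies the guard of the renumbering clause of ${\mathbf R}$ at $v$, so $v$ would still have an enabled update, contradicting termination. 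Hence $N_\rho(v) = N_\rho(v')$, completing the verification.
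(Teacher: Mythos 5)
The paper itself does not prove this lemma; it defers entirely to~\cite{CMasynj,CGMT07,CGM12}. Your overall architecture --- monotone, inclusion-increasing mailboxes for termination, flooding plus connectivity for properties (2)--(3), a guard-based contradiction for (4), and FIFO bookkeeping for (5) --- does match the structure of the proofs in those references, so the high-level plan is the right one.

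There is, however, one genuine gap, and it sits exactly at the heart of Mazurkiewicz's argument: your treatment of property (1), and consequently of termination. You correctly observe that the set of numbers \emph{ever chosen} during the execution is a contiguous prefix $[1,K]$ (since a renumbering always picks $1+\max$ over numbers already present in a mailbox). But property (1) asserts that the set of numbers held \emph{at termination} is an interval $[1,k]$ with $k\le|V(G)|$, and properties (2)--(3) only give you the inclusion $\{n_\rho(v)\mid v\in V(G)\}\subseteq[1,K]$; they do not exclude that some intermediate number $m$ was chosen and later abandoned by \emph{every} vertex that ever held it, each one renumbering upon discovering a stronger competitor. Ruling this out requires the key invariant that, at every instant, the set of currently assigned nonzero numbers is an interval --- equivalently, that among the vertices holding a given number $m$, at least one (e.g., a $\preceq$-maximal holder) never relinquishes it. That same invariant is what bounds the largest number ever selected by $|V(G)|$; without it the set of possible pairs $(n,{\mathcal N})$ is not finite, so your termination argument (finitely many mailbox values) does not close either, and pointing to ``the complexity analysis cited from~\cite{CMasynj}'' is circular, since that analysis is built on the invariant. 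A secondary, repairable imprecision: in your proof of (4), ``$v$ would still have an enabled update'' is not meaningful in this model, because the guard is evaluated only upon message receipt and with the \emph{pre-update} number and view. To close the argument you must examine the last event at $v$: either the guard fires there, forcing $n_\rho(v)>n_\rho(v')$ (contradiction), or it does not, in which case $N_\rho(v')\preceq N^{-}\preceq N_\rho(v)$, where $N^{-}$ is the pre-event view and the second inequality is the monotonicity of local views under $\preceq$ --- itself a consequence of the fact that a vertex's number never decreases. That monotonicity lemma is missing from your sketch and is needed to reach the contradiction with $N_\rho(v)\prec N_\rho(v')$.
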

\begin{proposition}\label{prop-graphe-final}
Given a graph $G$ with a port numbering $\delta$, we can associate
with the final labeling of any execution $\rho$ of $\mk$ on
$(G,\delta)$, a labeled
digraph $(D,\delta')$ 
such that $(Dir(G),\delta)$ is a
symmetric covering of $(D,\delta')$. Furthermore, each vertex of $G$ can compute
$(D,\delta')$, and vertices of $D$ are enumerated by numbers of $[1,|D|]$.
\end{proposition}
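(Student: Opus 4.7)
The plan is to construct the digraph $(D,\delta')$ explicitly from the final labeling and then verify that the map sending each vertex $v$ of $G$ to its final number $n_\rho(v)$ extends to a symmetric covering that preserves port numbers. By Lemma~\ref{lem-fundamental}(1)--(2), at termination every vertex shares the same mailbox $M_\rho$ whose first components form an interval $[1,k]$, and property~(4) tells us that the pair $(n_\rho(v),N_\rho(v))$ stored for a given number is unambiguous. So I would define $V(D) = [1,k]$ and, using the unique local view $\mathcal{N}_i$ associated with each $i\in[1,k]$ in $M_\rho$, introduce for every triple $(m,p,q)\in\mathcal{N}_i$ an arc from $m$ to $i$ carrying port numbers $p$ on the source side and $q$ on the target side (together with its symmetric arc, from $i$ to $m$ with ports $q,p$). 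This defines $(D,\delta')$. Since the mailbox is identical at every vertex, each node can carry out this construction locally, which takes care of the computability claim.

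Next I would define $\varphi$ on vertices by $\varphi(v) = n_\rho(v)$, and on arcs by sending the arc $a$ of $Dir(G)$ going from $u$ to $v$ (with $\delta_u(v)=q$, $\delta_v(u)=p$) to the arc of $D$ from $n_\rho(u)$ to $n_\rho(v)$ with the same port pair $(p,q)$. Property~(5) of Lemma~\ref{lem-fundamental} is exactly what makes this well-defined: an arc of $D$ incoming at $\varphi(v)$ corresponds, by construction, to a triple $(m,p,q)\in N_\rho(v)$, and that triple exists if and only if $v$ has a unique neighbor $w\in N_G(v)$ with $\delta_v(w)=q$, $\delta_w(v)=p$ and $n_\rho(w)=m$. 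This gives the bijection on incoming arcs at each vertex, hence the fibration property with port preservation. The opfibration property follows symmetrically (in $Dir(G)$ outgoing arcs at $v$ are in involution with incoming arcs, and the same holds in $D$ by construction); preservation of the involution $Sym$ is then immediate because we introduced both $(p,q)$ and $(q,p)$ arcs simultaneously.

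The final bookkeeping is to check that $\varphi$ is surjective and that $D$ is indeed symmetric and well-defined. Surjectivity of $\varphi$ on vertices follows from property~(1) together with property~(3): every label in $[1,k]$ appears as $n_\rho(v)$ for some $v$. Surjectivity on arcs follows from property~(5), since every triple in $\mathcal{N}_i$ was inserted into the mailbox by some vertex with number $i$, whose incident edges realize the corresponding arc. Finally, the fact that the vertices of $D$ are enumerated by $[1,k]\subseteq[1,|V(G)|]$ is already recorded in Lemma~\ref{lem-fundamental}(1).

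The main obstacle I anticipate is verifying carefully the well-definedness of $D$: one must ensure that different vertices of $G$ with the same number $i$ always produce exactly the same local view $\mathcal{N}_i$, so that the set of arcs incident to vertex $i$ of $D$ does not depend on which preimage one looks at. This is where property~(4) of Lemma~\ref{lem-fundamental} is essential, but it must be combined with property~(5) to guarantee that the port-number pairs are consistent across preimages; once this is established, the covering, fibration, opfibration, and symmetry conditions all reduce to bookkeeping on the triples $(m,p,q)$.
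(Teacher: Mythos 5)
Your construction is the standard one (the paper itself does not prove this proposition but defers to \cite{CMasynj,CGMT07,CGM12}, where essentially your argument appears): take $V(D)=[1,k]$, turn each triple $(m,p,q)$ of the local view attached to $i$ into an arc from $m$ to $i$ with source port $p$ and target port $q$, let $\varphi=n_\rho$ on vertices, and read the fibration/opfibration/symmetry conditions off property~(5) of Lemma~\ref{lem-fundamental} together with the injectivity of each $\delta_v$. That part is sound, and your identification of well-definedness as the delicate point is the right instinct.

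There is, however, one concrete flaw in the construction as you wrote it: you speak of ``the unique local view $\mathcal{N}_i$ associated with each $i\in[1,k]$ in $M_\rho$'', but the mailbox is \emph{cumulative} and in general contains several pairs $(i,\mathcal{N})$ with the same first component --- e.g.\ every vertex that ever held number $1$ first inserted $(1,\emptyset)$ and later inserted $(1,N)$ for successively richer views $N$. So ``the view associated with $i$ in $M_\rho$'' is not well defined, and a node trying to build $D$ from $M_\rho$ alone needs a selection rule. The missing ingredient is the order $\prec$: define $\mathcal{N}_i$ as the $\prec$-maximum of $\{\mathcal{N} : (i,\mathcal{N})\in M_\rho\}$. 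This is the correct choice because (a) by property~(3) the actual final view $N_\rho(v)$ of any $v$ with $n_\rho(v)=i$ belongs to that set, and (b) the guard of rule $\mathbf{R}$ guarantees that at termination no $(i,\mathcal{N})\in M_\rho$ satisfies $N_\rho(v)\prec\mathcal{N}$ (otherwise $v$ would have renumbered itself), so $N_\rho(v)$ is that maximum; property~(4) then makes the choice independent of the preimage $v$. Note that property~(4) alone, which you invoke, only says that distinct vertices with the same final number have equal final views --- it does not let a node recognize, inside the mailbox, which recorded view is the final one. With this selection rule inserted, the rest of your verification (bijection between triples of $\mathcal{N}_i$ and arcs incident to $v$ in $Dir(G)$, simultaneous creation of the $(p,q)$ and $(q,p)$ arcs to define $Sym$, surjectivity from property~(1)) goes through as you describe.
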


The proofs of the above properties can be found in \cite{CMasynj,CGMT07,CGM12}.

\end{document}